\renewcommand{\vec}[1]{\mathbf{#1}}
\providecommand{\norm}[1]{\lVert#1\rVert}
\newtheorem{theorem}{Theorem}
\newtheorem{corollary}[theorem]{Corollary}
\newtheorem{lemma}[theorem]{Lemma}
\newacronym{ild}{ILD}{interaural level difference}%
\newacronym{itd}{ITD}{interaural time difference}%
\newacronym{hrtf}{HRTF}{head related transfer function}%
\newacronym{wfs}{WFS}{wave-field synthesis}%
\newacronym{hoa}{HOA}{higher-order ambisonics}%
\newacronym{oct}{OCT}{optimized cardioid triangle}%
\newacronym{psr}{PSR}{perceptual sound-field reconstruction}%
\newacronym{vbap}{VBAP}{vector-base amplitude panning}%
\newacronym{idir}{ID}{intensity directivity}%
\newacronym{ictd}{ICTD}{inter-channel time difference}%
\newacronym{icld}{ICLD}{inter-channel level difference}%
\newacronym{tild}{TILD}{time-intensity linear directivity}%
\newacronym{tid}{TID}{time-intensity directivity}%
\newacronym{erb}{ERB}{equivalent rectangular bandwidth}%
\newacronym{wng}{WNG}{white noise gain}%
\newacronym{snr}{SNR}{signal to noise ratio}%
\newacronym{rir}{RIR}{room impulse response}%
\newacronym{sdn}{SDN}{scattering delay network}%
\newacronym{dwm}{DWM}{digital waveguide mesh}%
\newacronym{dwn}{DWN}{digital waveguide network}%
\newacronym{fdn}{FDN}{feedback delay network}%
\newacronym{ism}{IM}{image method}%z
\newacronym{ortf}{ORTF}{office de radiodiffusion t\'el\'evision fran\c{c}aise}%z
\newacronym{fft}{FFT}{fast Fourier transform}%z
\newacronym{flops}{FLOPS}{floating point operations per second}%z
\newacronym{ned}{NED}{normalized echo density}
\newacronym{fdtd}{FDTD}{finite-difference time-domain}
\newacronym{bibo}{BIBO}{bounded-input bounded-output}
\newacronym{svd}{SVD}{singular value decomposition}
\newacronym{vr}{VR}{virtual reality}
\newacronym{cave}{CAVE}{CAVE automatic virtual environment}
\begin{document}
\title{Efficient Synthesis of Room Acoustics  \\via Scattering Delay Networks}%
\author{Enzo De Sena,~\textit{Member,~IEEE}, 
		H\"{u}seyin~Hac\i{habib}o\u{g}lu,~\textit{Senior Member,~IEEE}, \\
		Zoran Cvetkovi\'c,~\textit{Senior Member,~IEEE}, 
		Julius O. Smith,~\textit{Member,~IEEE}%

\thanks{Copyright \copyright~2015 IEEE. Personal use of this material is permitted. However, permission to use this material for any other purposes must be obtained from the IEEE by sending a request to pubs-permissions@ieee.org. http://ieeexplore.ieee.org/xpl/articleDetails.jsp?arnumber=7113826}
\thanks{Manuscript received February 19, 2015; accepted May 16, 2015. Date of publication June 01, 2015. 
The work reported in this paper was partially funded by (i) EPSRC Grant EP/F001142/1, (ii) European Commission under Grant Agreement no. 316969 within the FP7-PEOPLE Marie Curie Initial Training Network ``Dereverberation and Reverberation of Audio, Music, and Speech (DREAMS)", and (iii) TUBITAK Grant 113E513 ``Spatial Audio Reproduction Using Analysis-based Synthesis Methods".
The method presented in this paper is protected by USPTO Patent n. 8908875.The associate editor coordinating the review of this manuscript and approving it for publication was Prof. Bozena Kostek.}
\thanks{Enzo De Sena is with ESAT--STADIUS, KU Leuven,
Kasteelpark Arenberg 10, 3001 Leuven, Belgium (e-mail: enzo.desena@esat.kuleuven.be). This work was done in part while he was with CTR, King's College London.}%
\thanks{Zoran Cvetkovi\'c is with the Centre for Telecommunications Research (CTR), King's College London, Strand, London, WC2R 2LS, United Kingdom (e-mail: zoran.cvetkovic@kcl.ac.uk).}%
\thanks{H\"useyin Hac\i{}habibo\u{g}lu is with Department of Modeling and Simulation, Informatics Institute, Middle East Technical University, Ankara, 06800, Turkey (e-mail: hhuseyin@metu.edu.tr). This work was done in part while he was with King's College London.}%
\thanks{Julius O. Smith III is with CCRMA, 
Stanford University, Stanford, CA 94304, USA.}%
\thanks{This paper has supplementary downloadable material available at http://ieeexplore.ieee.org, provided by the author. The material includes several audio samples generated using the proposed room acoustic simulator. Contact enzo.desena@esat.kulueven.be for further questions about this work.}%
%%\thanks{Publisher Item Identifier XX.XXXX/XXXXX.XXXX.XXXXXX}
}

\graphicspath{{figures/}}

\IEEEpubid{Copyright \copyright~2015 IEEE}%
\markboth{IEEE/ACM TRANSACTIONS ON AUDIO, SPEECH, AND LANGUAGE PROCESSING, VOL. 23, NO. 9, SEPTEMBER 2015}{De Sena \MakeLowercase{\textit{et al.}}: Efficient Synthesis of Room Acoustics via Scattering Delay Networks}%
\maketitle 

\setlength{\arraycolsep}{0.0em}

\begin{abstract}
An acoustic reverberator consisting of a network of delay lines connected via  scattering junctions is
proposed.  All parameters of the reverberator are derived 
from physical properties of the enclosure it simulates. It allows for
simulation of unequal and frequency-dependent wall absorption,
as well as directional sources and microphones.
The reverberator renders the first-order reflections exactly, while
making progressively coarser approximations of higher-order
reflections.  The rate of energy decay is close to that obtained
with the \gls{ism} and consistent with the predictions of
Sabine and Eyring equations. The time evolution of the normalized
echo density, which was previously shown to be correlated with the
perceived texture of reverberation, is also close to that of \gls{ism}. 
However, its computational complexity
is one to two orders of magnitude lower, 
comparable to the computational complexity of a \gls{fdn}, 
and its memory requirements are negligible.

\end{abstract}
\begin{IEEEkeywords}
Room acoustics, acoustic simulation, digital waveguide network, 
reverberation time, echo density.
\end{IEEEkeywords}

\glsreset{fdn}
\glsreset{ism}

\section{Introduction}

\lettrine[lines=2]{A}{} comprehensive  account of the first fifty years 
of artificial reverberation 
 in~\cite{valimaki2012fifty}
identifies three main classes of digital reverberators:
delay networks, physical room models and convolution-based algorithms. 
The earliest class consisted of delay networks, 
which were the only artificial reverberators feasible 
with the integrated circuits of the time.
The first delay network reverberator, as introduced by Schroeder, was  a cascade of several allpass filters, a
parallel bank of feedback comb filters and a mixing matrix
\cite{schroder1961,schroder1962,valimaki2012fifty}. 
Since then, a large number of delay networks have been proposed and used commercially.
Most of these networks were designed heuristically and by trial and error.
\Glspl{fdn} were developed on a more solid scientific grounding as a multichannel 
extension of the Schroeder reverberator~\cite{gerzon1976unitary,stautner1982},
and  consist of 
parallel delay lines connected recursively through a 
unitary feedback matrix. The state-of-the-art \gls{fdn} is due to 
Jot and Chaigne~\cite{jot1991}, who proposed
using delay lines connected in series with multiband absorptive filters to obtain a frequency-dependent
reverberation time. \Gls{fdn} reverberators
are still among the most commonly used 
artificial reverberators owing to their simple design, 
extremely low computational complexity and high reverberation quality.
While no new standard seems to have emerged yet, 
a number of more intricate networks have been proposed more recently 
that show improvements over \glspl{fdn}, 
sometimes even in terms of computational complexity~\cite{huopaniemi1997efficient,dahl2000reverberator,lee2012switched}.

\Glspl{fdn} are structurally equivalent to a particular case of \glspl{dwn}~\cite{Rocchesso:2002rq}, 
reverberators based on the concept of digital waveguides, 
 introduced in~\cite{Smith-III:1985ib}. 
\glspl{fdn} and \glspl{dwn} can also be viewed both as 
networks of multiport elements, as explained by Koontz in~\cite{koontz2013multiport}.
\Glspl{dwn} 
consist of a closed network of bidirectional delay lines
interconnected at lossless junctions.
\Glspl{dwn} have an exact physical interpretation 
as a network of interconnected acoustical tubes and
have a number of appealing properties 
in terms of computational efficiency and numerical stability.
Reverberators based on delay networks have been 
widely used for artistic purposes in music production. 
High-level interfaces enable artists and sound engineers to adjust the available free parameters until the intended qualities of reverberated sound are achieved. 

In the domain of predictive architectural modeling, 
virtual reality and computer games, on the other hand, 
the objective of artificial reverberators is to emulate the 
response of a physical room given 
a set of physically relevant parameters~\cite{vorlander2008auralization,SaviojaEtAl95}.
These parameters include, for instance, the room geometry, 
absorption and diffusion characteristics of walls and objects, 
and position and directivity pattern 
of source and microphone.  

\IEEEpubidadjcol

Various physical models have been proposed 
in the past for the purpose 
of room acoustic synthesis. Widely used 
geometric-acoustic  models  make the 
simplifying assumption that sound waves propagate as rays.
In particular, the ray tracing approach explicitly tracks 
the rays emitted by the acoustic source as 
they bounce off the  surfaces of  a modeled enclosure. 
The \gls{ism} is an alternative algorithmic implementation 
that formally replaces the physical boundaries surrounding 
the source with an equivalent infinite lattice of image sources~\cite{Allen79b,borish1984}. 
Allen and Berkley proved that, in the case of rectangular rooms,
this approach is equivalent 
to solving the wave equation provided that  the walls are perfectly rigid~\cite{Allen79b}. 
When the walls are not rigid, the results of the \gls{ism} are no longer 
physically accurate, but the method still retains its 
geometric-acoustic interpretation. 
The \gls{ism} can also be used to model the acoustics in arbitrary polyhedra, 
as described by Borish in~\cite{borish1984}.
The main advantage of geometric-acoustic models 
in comparison to other physical models 
is their lower -- although still considerable -- computational complexity.
However, they do not model important 
wave phenomena such as diffraction and interference.
These phenomena are inherently modeled by 
methods based on time and space discretization of the wave equation, 
such as \gls{fdtd} and \gls{dwm} models~\cite{Bilbao:2003ue,murphy2007acoustic,karjalainen2004digital,SaviojaEtAl95}.
The main limitation of physical room models consists of their 
significant computational and memory requirements.
While this may not be problematic for predictive acoustic modeling applications,
it is a significant limitation for interactive applications, such as \gls{vr}.
Similarly, convolutional methods, 
which operate by filtering anechoic audio samples 
with measured \glspl{rir}, 
do not allow interactive operation unless interpolation among an extensive number of \glspl{rir} is supported.

Virtual reality has become a widespread technology, 
with applications in military training, immersive virtual musical instruments, 
archaeological acoustics, 
and, in particular, computer games. 
Along with realistic 
graphics rendering, spatial audio is one of the most important factors
that affect how convincing its users perceive a virtual 
environment~\cite{zyda2005visual}. 
The aural and visual components should 
be consistent and mutually supportive so 
as to minimise cross-modal sensory conflicts~\cite{zyda2005visual}.

Room acoustic synthesis for \gls{vr} 
also requires flexibility in terms of audio output devices.
For example, a full scale \gls{vr} suite such 
as a \gls{cave} can use
Ambisonics~\cite{Gerzon:1985rw} 
or wave field synthesis (WFS), which requires from tens to
hundreds of loudspeaker channels~\cite{Boone:1995qa}. 
In contrast, a 
portable game console has a two-channel output that may be
used, for instance, to reproduce binaural audio over headphones~\cite{Moller:1991kb}. 
Typical home users, on the other hand, 
commonly use stereophonic, 5.1 or 7.1 setups, whereas the ultra high definition TV (UHDTV) standard, also aimed at home users, makes provisions for 22.2 setups~\cite{SMPTE2008}.

In summary, room acoustic synthesizers for \gls{vr} require 
(i)~explicit (tuning-free) modeling of a given virtual space,
(ii)~scalability in terms of playback configuration, 
and (iii)~low computational complexity.
On the other hand, of the three classes of digital reverberators, 
(i)~convolutional methods do not allow interactive operation without extensive tabulation and interpolation,
(ii)~delay network methods do not model explicitly a given 
virtual space, 
and (iii)~physical models have a high computational cost.

In order to combine the appealing properties of delay networks
and physical models, one possible approach consists of 
designing delay networks that have parameters with 
an explicit physical interpretation.
Studies in 
\cite{rocchesso1995ball,Rocchesso:2002rq} and \cite{huangdigital}
follow this direction.
In  \cite{Rocchesso:2002rq} and \cite{rocchesso1995ball} 
the length of the delay lines of  \glspl{fdn} are chosen 
such that the lowest eigenfrequencies of the room are reconstructed exactly. 
In~\cite{huangdigital}, Karjalainen \textit{et al.} use a \gls{dwn} with few junctions 
to model a rectangular room.
The rationale behind the design is 
to aggressively prune-down a \gls{dwm}
in order to reduce the complexity 
while retaining an acceptable perceptual result.
This approach has various advantages but requires 
careful manual tuning in order to provide satisfactory results~\cite{huangdigital}.

In~\cite{de2011scattering} and~\cite{hacihabiboglu2011frequency}, following the same concept of \gls{dwn} 
structures as studied by Karjalainen \textit{et al.} in~\cite{huangdigital}, 
we presented an architecture that has a number of 
appealing properties.
The proposed architecture, which we refer to as \gls{sdn},
renders the direct-path component 
and first-order early reflections accurately both in time and amplitude, 
while producing a progressively coarser approximation of higher-order reflections. 
For this reason, \gls{sdn} can be interpreted 
as an approximation 
to geometric acoustics models 
\cite{Allen79b,borish1984}.
\Glspl{sdn} thus approach the accuracy of
full-scale room simulation while maintaining computational efficiency
on par with typical delay network-based methods. 
Furthermore, the parameters of \gls{sdn} are inherited directly 
from room geometry and absorption properties of wall materials, 
and therefore do not require ad hoc tuning.

This paper further explores and completes 
the design presented in~\cite{de2011scattering} and~\cite{hacihabiboglu2011frequency}. 
All design choices are now explained on a physical basis.
Furthermore, the paper includes 
a theoretical analysis of optimal scattering matrices, 
a comparison with the \gls{ism} in terms of reverberation time 
and normalized echo density~\cite{huang2007aspects}, 
an analysis of the computational complexity and of the memory requirements, 
and an analysis of the modal density~\cite{gardner1998}. 
The paper is organized as follows. Section~\ref{sec:bgrd} presents a
brief overview of \glspl{fdn}, \glspl{dwn} and models 
proposed by Karjalainen \textit{et al.}~\cite{huangdigital}.
Section~\ref{sec:dwmfdn} describes the proposed \gls{sdn} 
method. 
The properties of \glspl{sdn} are studied in Section~\ref{sec:propsdn}.
Section~\ref{sec:results} presents numerical evaluation results.
Section~\ref{sec:conclusions} concludes the paper.

\section{Background}\label{sec:bgrd}

The proposed \gls{sdn} reverberator draws inspiration from \gls{dwn} and \gls{dwm} structures, 
however it is in essence a recursive linear time-invariant system. 
Hence, to provide a comprehensive context, in this section we  briefly review \glspl{fdn},
which are the most commonly used recursive linear time-invariant reverberators, followed by  a more detailed review of relevant \gls{dwn} and \gls{dwm} material.

\subsection{Feedback Delay Networks}
\begin{figure}
\centering
\includegraphics[width=0.5\textwidth]{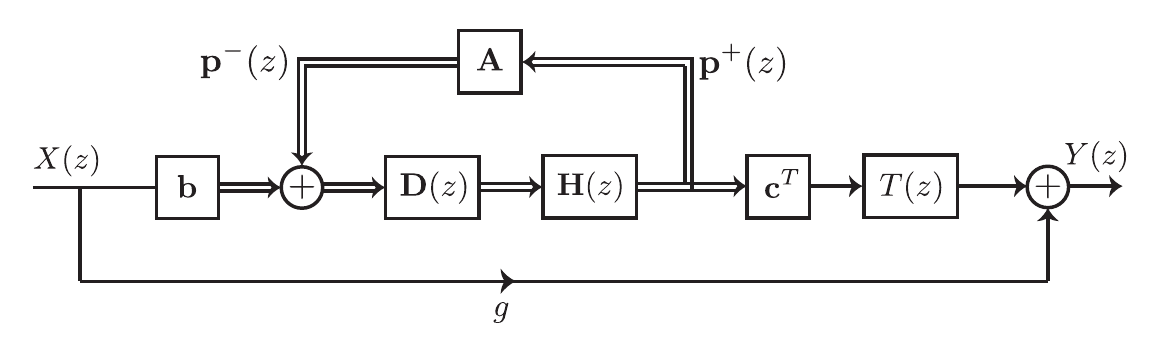}
\caption{Block diagram of the modified \protect\gls{fdn} reverberator as proposed by Jot and Chaigne~\cite{jot1991}.}
\label{fig:jotblock}
\end{figure}
\glsreset{fdn}

The canonical \gls{fdn} form, as proposed by Jot and Chaigne~\cite{jot1991}, is shown in
\figurename~\ref{fig:jotblock}.
Here, 
$\mathbf{b}$ and $\mathbf{c}$ are input and output gains, respectively, 
$\mathbf{D}(z)=\textrm{diag}(z^{-m_{1}},\ z^{-m_{2}},\ \dots,\ z^{-m_{N}})$ 
are integer delays, $\mathbf{H}(z)=\textrm{diag}(H_1(z),\dots,\ H_N(z))$ 
are absorption filters, $T(z)$ is the tone correction filter,  
$g$ is the gain of the direct path, 
and $\mathbf{A}$ is the  feedback matrix.
The absorption filters can be designed so as to obtain a desired 
reverberation time in different frequency bands~\cite{jot1991}, 
or to match those calculated from a measured \gls{rir}~\cite{jot1992analysis}.
To achieve  a high-quality late reverberation, the  feedback loop should be lossless, i.e. energy-preserving, hence typically the feedback
matrix  $\mathbf{A}$ is designed to be unitary. 
Each particular choice of
the feedback matrix has corresponding implications on subjective or objective qualities 
of the reverberator~\cite{fallerunitary}; e.g. in the particular case of the identity matrix, $\mathbf{A}=\mathbf{I}$, 
the \gls{fdn} structure reduces  to $N$ 
comb filters connected in parallel and acts as the 
Schroeder reverberator~\cite{schroder1962}.
Note, however that 
unitary matrices are only a subset of possible lossless feedback matrices~\cite{Rocchesso:2002rq,jot1997efficient};
we elaborate on this point in the next subsection.

\glsreset{dwm}
\subsection{Digital Waveguide Networks}
\label{sec:dwn}

\begin{figure}[t]
\centering
\includegraphics[width=0.45\textwidth]{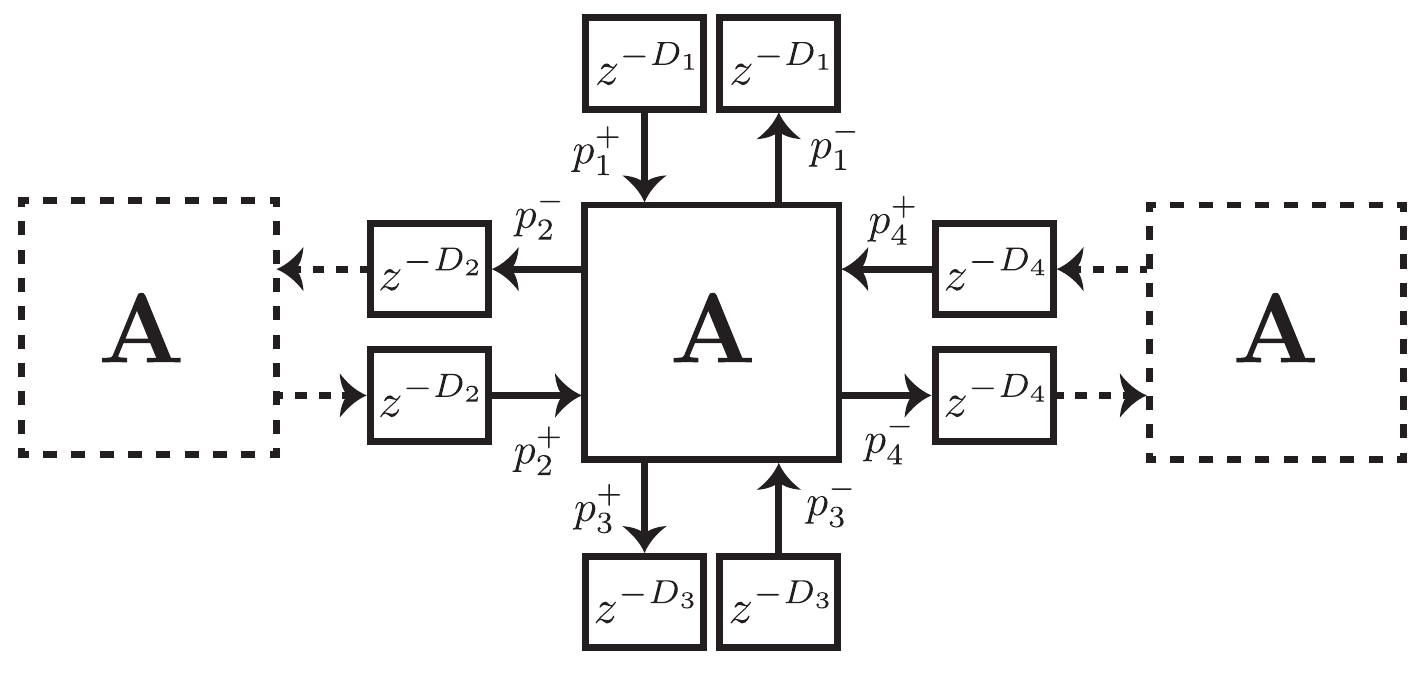}
\caption{Operation of a \protect\gls{dwn} around a junction with $K=4$
 waveguides.}
\label{fig:dwmjunction}
\end{figure}

\Glspl{dwn} consist of a closed network 
of  digital waveguides~\cite{Smith-III:1985ib}.
A digital waveguide is made up of a pair of delay lines,
which implement the digital equivalent 
of the d'Alembert solution of the wave equation
in a one-dimensional medium.
The digital waveguides are interconnected at junctions,
characterised by corresponding scattering matrices.
\figurename\ \ref{fig:dwmjunction} shows an example
of four digital waveguides with length $D_1,...,D_4$ samples
that meet at a junction with scattering matrix $\mathbf{A}$.
In  general, a junction scatters incoming wave variables
$\mathbf{p}^+=\left[p_{1}^+,\dots,p_{K}^+\right]^T$ to produce outgoing wave variables
$\mathbf{p}^-=\left[p_{1}^-,\dots,p_{K}^-\right]^T$ 
according to
$\mathbf{p}^-=\mathbf{A}\mathbf{p}^+$. 
Note that if all digital waveguides are terminated by an ideal non-inverting reflection, the \gls{dwn} is structurally equivalent 
to the feedback loop of  an \gls{fdn} with feedback matrix $\mathbf{A}$ and delay-line lengths of 
$2D_1,...,2D_4$ samples~\cite{Rocchesso:2002rq}.

\Gls{dwn} junctions are lossless. In this context, \emph{losslessness} is defined according to 
classical network theory~\cite{smith2010physical}. 
In particular, 
a junction with scattering matrix $\mathbf{A}$ is said to be lossless if 
the input and output \emph{total complex power} are equal:
\begin{equation}
\mathbf{p}^{+*}\mathbf{Y}\mathbf{p}^+=\mathbf{p}^{-*}\mathbf{Y}\mathbf{p}^- \Rightarrow \mathbf{A}^*\mathbf{Y}\mathbf{A}=\mathbf{Y}
\label{eq:lossless}
\end{equation}
where 
 $\mathbf{Y}$ is a Hermitian 
positive-definite matrix~\cite{Rocchesso:2002rq} and $(\cdot)^*$ denotes the conjugate transpose.
The quantity $\mathbf{p}^{\pm *}\mathbf{Y}\mathbf{p}^\pm$
is the square of the \emph{elliptic norm} of $\mathbf{p}^\pm$
induced by $\mathbf{Y}$. 
It can be shown that
a matrix $\mathbf{A}$ is lossless if and only if 
its eigenvalues lie on the unit circle and it admits a basis of linearly independent vectors~\cite{Rocchesso:2002rq}.
A consequence of this result is that lossless feedback matrices can be 
fully parametrized as
$\mathbf{A}=\mathbf{T}^{-1}\mathbf{\Lambda}\mathbf{T}$, 
where $\mathbf{T}$ is any invertible matrix 
and $\mathbf{\Lambda}$ is any unit-modulus diagonal matrix~\cite{Rocchesso:2002rq}.

The \gls{dwn} can also be interpreted as a physical model for a network of acoustic tubes.
In this case $\mathbf{A}$ assumes a particular form.
If we denote by $y_i$ the characteristic 
admittance of the $i$-th tube
and by $v_i$ the volume velocity of the $i$-th tube at the junction,
the continuity of pressure and conservation of velocity
at the junction give, from  \cite{smith2010physical}:
\begin{align}
p_1&=p_2=\dots=p_K=p, \label{eq:continuity}\\
v_1&+v_2+\dots+v_K = 0 \label{eq:conservation},
\end{align}
respectively, where $p_i=p_i^+ +p_i^-$ denotes 
the acoustic pressure of the $i$-th tube.
Equations (\ref{eq:continuity}) and (\ref{eq:conservation})
imply that the  pressure $p$ at the junction is given by 
\begin{equation}
p=\frac{2}{\sum_{i=1}^{K}y_i}\sum_{i=1}^{K}y_i p_i^+=\frac{2}{\sum_{i=1}^{K}y_i}\sum_{i=1}^{K}y_i p_i^-,
\label{eq:pressurep}
\end{equation}
where we used $v_i=v_i^+ + v_i^-$ 
and  Ohm's law for traveling waves $v_i^+=y_i p_i^+$ 
and $v_i^-=-y_i p_i^-$~\cite{smith2010physical}.
Since $p_i^-=p - p_i^+$, the scattering matrix can be expressed as
\begin{equation}
\mathbf{A} = \frac{2}{\langle\mathbf{1},\mathbf{y}\rangle} \mathbf{1} \mathbf{y}^T-\mathbf{I},
\label{matrix:unnorm}
\end{equation}
where $\mathbf{1}=[1,\dots,1]^T$,  $\mathbf{y}=[y_1,\dots,y_K]^T$, $\langle\cdot,\cdot\rangle$ denotes the scalar product, 
and $\mathbf{I}$ is the identity matrix.
Observe that the scattering matrix  in (\ref{matrix:unnorm}) 
satisfies equation~(\ref{eq:lossless})
with $\mathbf{Y}=\text{diag}\{y_1,\dots,y_K\}$
and is therefore lossless.
In this physically-based case,
the square of the elliptic norm of $\mathbf{p}^\pm$ induced by $\mathbf{Y}$  has  the meaning of incoming/outgoing acoustic power:
$\mathbf{p}^{\pm *}\mathbf{Y}\mathbf{p}^\pm = 
\sum_{i=1}^K y_i|p_i^\pm|^2=
\pm\sum_{i=1}^K v_i^*p_i^\pm$~\cite{Rocchesso:2002rq}.

An equivalent formulation of \glspl{dwn} involves 
normalized pressure waves, defined as $\widetilde{p}_i^\pm=p_i^\pm\sqrt{y_i}$.
In this case, 
the propagating wave variables  $\widetilde{p}_i^\pm$ represent
the square root of the traveling signal power~\cite{Rocchesso:2002rq}. 
If we define $\widetilde{\mathbf{Y}}=\text{diag}(\sqrt{y_1},\dots,\sqrt{y_K})$, 
the normalized output wave can be written as 
$\widetilde{\mathbf{p}}^-=\widetilde{\mathbf{Y}} \mathbf{p}^-=\widetilde{\mathbf{Y}}\mathbf{A}\mathbf{p}^+=\widetilde{\mathbf{Y}} \mathbf{A}\widetilde{\mathbf{Y}}^{-1}\widetilde{\mathbf{p}}^+=\widetilde{\mathbf{A}}\widetilde{\mathbf{p}}^+$,
where  $\widetilde{\mathbf{A}}=\widetilde{\mathbf{Y}} \mathbf{A}\widetilde{\mathbf{Y}}^{-1}$.
The equivalent scattering matrix $\widetilde{\mathbf{A}}$ can  be expressed as
\begin{equation}
\widetilde{\mathbf{A}} = \frac{2}{\|\widetilde{\mathbf{y}}\|^2}\widetilde{\mathbf{y}}\widetilde{\mathbf{y}}^T-\mathbf{I},
\label{eq:house}
\end{equation}
which is a Householder reflection around the vector $\widetilde{\mathbf{y}}=\left[\sqrt{y_1},\dots,\sqrt{y_K}\right]^T$. Such Householder matrices will be also
used in the context of \gls{sdn} reverberators, proposed in the next section, where they will exhibit some sought-after properties, including low computational complexity and desirable normalised echo density profiles.

In order to inject energy in a \gls{dwn}, various methods have been used in the past,
ranging from attaching an additional waveguide where the outgoing wave is ignored~\cite{Smith-III:1985ib} or using 
an adapted impedance such that there is no energy reflected along the outgoing wave, 
to more complex approaches~\cite{evangelista2010player}.
A common approach is to apply an external 
ideal volume velocity source to the junction~\cite{karjalainen2004digital}.
This is equivalent to 
superimposing source pressure, $p_{S}$, to the  pressure  due to the waveguides meeting at the junction,  $p$, thus making the total pressure at the junction
equal to:
\begin{equation}
\overline{p}=p_{S}+p. 
\label{eq:nodepressure}
\end{equation}
In the context of \gls{fdtd} 
models, a source that injects energy in this way is called a \emph{soft-source},
as opposed to a \emph{hard-source}, 
which actively interferes with the propagating
pressure field~\cite{sheaffer,murphy2014source}.
In order to implement equation (\ref{eq:nodepressure}) in the \gls{dwn} structure,
the input from the source can be
distributed uniformly to incoming wave variables according to
\begin{equation}
\overline{\mathbf{p}}^+=\mathbf{p}^++\frac{p_{S}}{2}\mathbf{1}~.
\label{eq:mod}
\end{equation}
which provides the intended node pressure~\cite{murphythesis}. This is the approach that will be used for injecting source energy in the proposed \gls{sdn} structures.

\subsection{Digital Waveguide Meshes}

\glspl{dwn} formed of fine grids 
of scattering junctions, referred to as \emph{ digital waveguide meshes}, are
 used to model wave propagation in an acoustic medium~\cite{bilbao2004wave}.
Each spatial sample in a \gls{dwm} is represented by a
$K$-port scattering junction, connected to its geometric neighbors
over bidirectional unit-sample delay lines. 
In the typical case of an isotropic medium,  $\mathbf{y}$  is a constant vector, while
$\mathbf{A}$ and $\widetilde{\mathbf{A}}$ are identical and given by
\begin{equation}
\mathbf{A}=\frac{2}{K} \mathbf{1}\mathbf{1}^T-\mathbf{I}~.
\end{equation}
We will refer to such a scattering matrix as the \emph{isotropic scattering matrix}.

In the one-dimensional band-limited case, the \gls{dwm} model
provides the exact solution of the wave equation~\cite{smith2010physical}.
In the two~\cite{kelloniemi2006simulation}
and three~\cite{Hacihabiboglu:2009gd} dimensional cases,
sound propagates in a DWM at slightly 
different speeds in different directions and different frequencies,
causing a \emph{dispersion error}~\cite{Strikwerda:2004bs},
which can be controlled and reduced to some extent by 
means of careful design of the mesh topology
or by using interpolated meshes and frequency warping 
methods~\cite{savioja2004interpolated, murphy2007acoustic}.

Accurate modeling with \glspl{dwm} requires mesh topologies with a very fine resolution 
(e.g. $\approx 10^7$ junctions for
a room of size $4\times 6\times 3$~m~\cite{huangdigital}).
That makes the computational load and the amount of memory required prohibitively high for
real-time operation, especially for large rooms.
These drawbacks motivated the work of 
Karjalainen \emph{et al.} reported in~\cite{huangdigital},
which is reviewed in the next subsection. 

\subsection{Reduced Digital Waveguide Meshes}
\begin{figure}[t]
\centering
\includegraphics[width=0.32\textwidth]{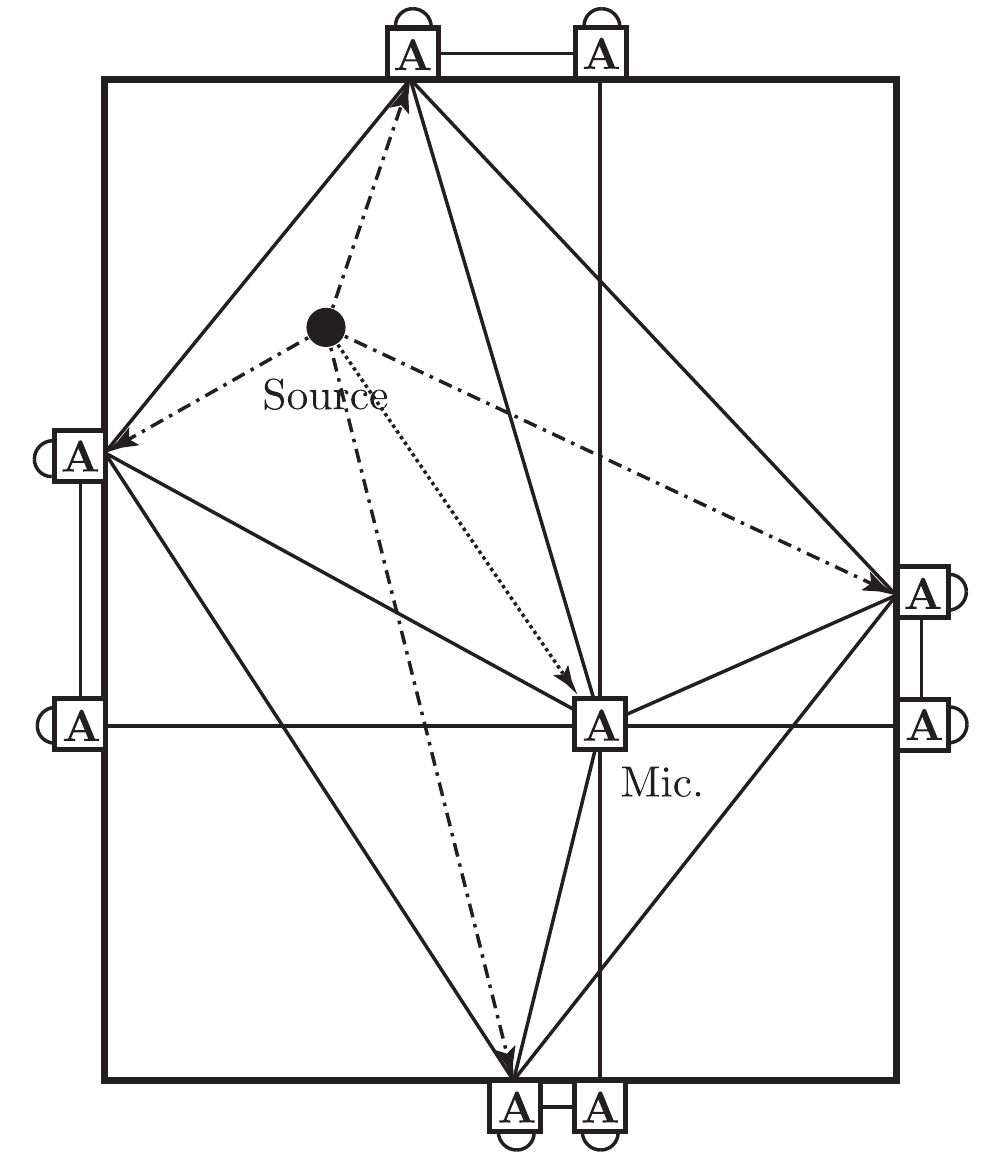}
\caption{Conceptual depiction of one of the \protect\gls{dwn} 
topologies proposed by Karjalainen~et al.\
as seen by an observer above the simulated enclosure 
(in this case a 2D rectangular room)~\cite{huangdigital}.
The solid black lines denote the bidirectional delay lines interconnecting scattering nodes.
The scattering nodes are denoted by the $\mathbf{A}$ blocks,
where $\mathbf{A}$ is the lossless scattering matrix.
The dash-dotted lines denote the unidirectional absorptive delay lines.
The dotted line denotes the line-of-sight (LOS) component.
The solid arcs around junctions denote loaded self-connections implementing losses.
Please note that while this figure represents the case of a 2D rectangular room, all the simulations in this paper use 3D rectangular rooms.}
\label{fig:roompos}
\end{figure}

In order to lower the computational complexity of \gls{dwm} models, 
Karjalainen \emph{et al.} 
considered coarse approximations of room response
synthesis via sparse \gls{dwm}
structures~\cite{huangdigital}.
One such structure is shown in 
\figurename~\ref{fig:roompos}. In this network, the sound source is
connected via unidirectional absorbing delay lines to scattering
junctions. These junctions are positioned at the locations where first-order reflections
impinge on the walls.  This ensures that delays of first-order
reflections are rendered accurately. 
The junctions are connected
via bidirectional delay lines with the microphone, which is also modeled as
a scattering junction contributing to the energy circulation in
the network.  The line-of-sight component is modeled by a direct
connection between the source and the microphone.  Additional
bidirectional delay lines parallel to the wall edges are included to better
simulate room axial modes~\cite{Kuttruff:2000zl}.  All the wall
junctions are connected in a ring topology.

In order to model losses at the walls, it appears that 
a combination of junction loads and additional self-connections is used.
However, implementation details are not given,  
and the authors state that the network 
required careful heuristic tuning~\cite{huangdigital}.
For these reasons the results are difficult to replicate.

In the next section, we describe a 
structure which renders the direct path and first-order early reflections 
accurately both in time and amplitude, 
while producing progressively coarser approximations of higher order reflections and late reverberation.
The proposed method has parameters that are inherited directly 
from room geometry and absorption properties of wall materials,
and thus does not require tuning. In order to distinguish the proposed structure 
from the ones considered by Karjalainen \emph{et al.} in~\cite{huangdigital}, 
and because of the importance of the scattering operation,
we refer to the proposed structures as \emph{scattering delay networks} (\glspl{sdn}).

\section{Scattering Delay Networks}\label{sec:dwmfdn}

\begin{figure}[t]
  \centering
  \includegraphics[width=0.32\textwidth]{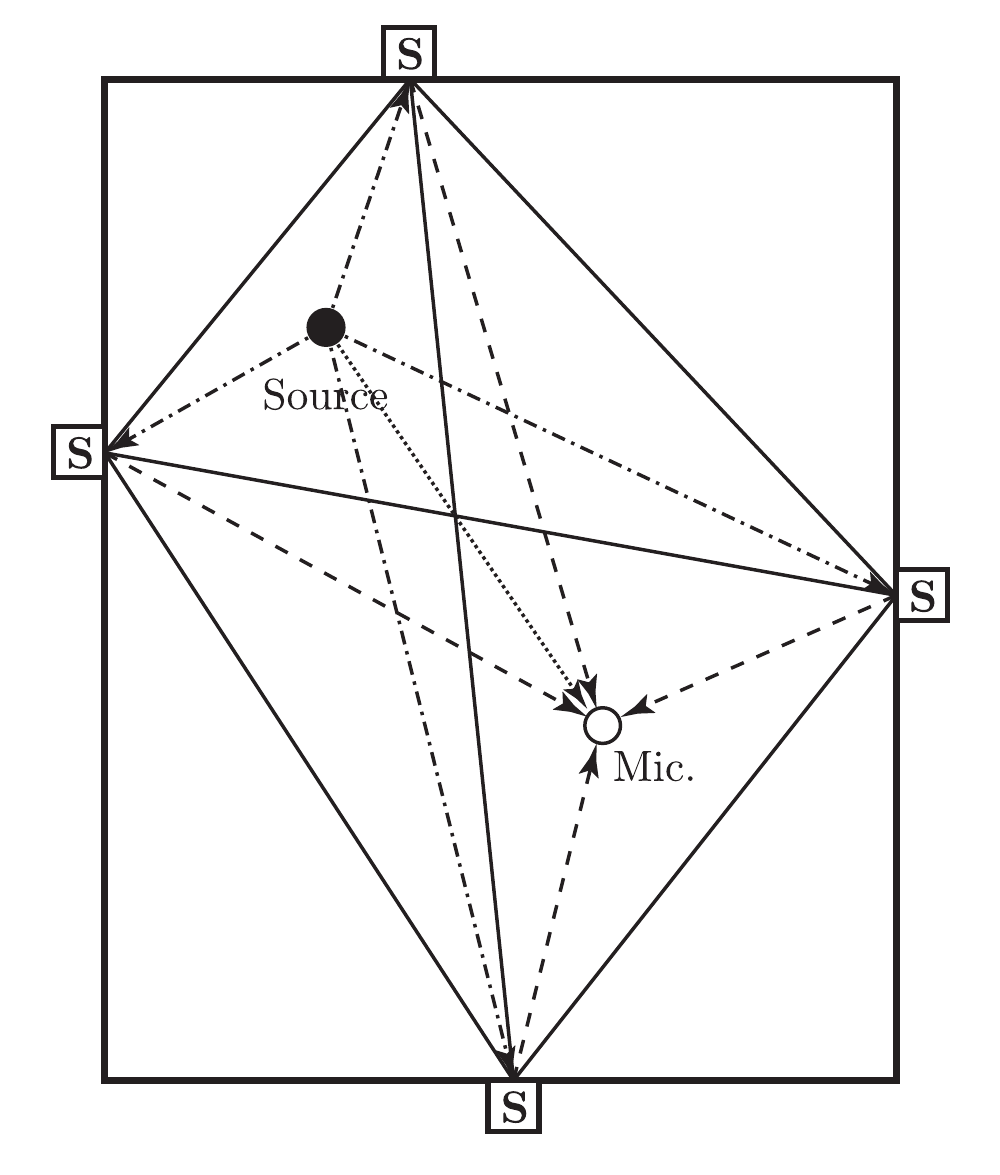}
  \caption{Conceptual depiction of the 
    \protect\gls{sdn} reverberator.
    The solid black lines denote bidirectional delay lines
    interconnecting the \protect\gls{sdn} wall nodes. 
The \protect\gls{sdn} wall nodes are denoted by the $\mathbf{S}$ blocks,
where $\mathbf{S}$ is the lossy scattering matrix. The dash-dotted lines
    denote unidirectional absorptive delay lines connecting the source
    to the \protect\gls{sdn} nodes.  The dashed lines denote unidirectional
    absorptive delay lines connecting the \protect\gls{sdn} nodes to the
    microphone.  The dotted line denotes the direct-path component.}
  \label{fig:SDN}
\end{figure}

The aim of the \gls{sdn} structure is to simulate the acoustics of an enclosure 
using a minimal topology which would ensure that each
significant reflection in the given space has a corresponding
reflection in the synthesized response. This requires representing
each significant reflective surface using one node in a fully
connected mesh topology.  The concept is illustrated in
\figurename\ \ref{fig:SDN}.
For clarity, considerations
 in this paper will pertain to rectangular empty rooms,
however all the presented concepts can be extended in a straightforward
manner to arbitrary polyhedral spaces with additional reflective
surfaces in their interior.  

As shown in
\figurename\ \ref{fig:SDN}, the network consists of a fully connected
\gls{dwn} with one scattering node for each wall. 
This network is minimal in the
sense that the removal of any of the nodes or paths would make it
impossible to model a significant subset of reflections which would
arise in the space.  

The source is connected to the scattering nodes
via unidirectional absorbing lines. 
As opposed to the reverberators proposed 
in~\cite{huangdigital}, 
the microphone node is a passive 
element that does not participate in the energy recirculation, hence
scattering nodes are connected to the microphone via
unidirectional absorbing lines and no energy is injected from the microphone node back to the network.

Early reflections are known to strongly contribute to one's perception
of the size and spaciousness of a room~\cite{rumsey2001spatial}.
For this reason, nodes are positioned on the walls at locations of first-order
reflections. 
Delays and attenuation of the lines connecting the
nodes, source, and microphone are set so that first-order reflections
are rendered accurately in their timing and energy. 

Second-order
reflections are approximated by corresponding paths within the
network. This is illustrated in \figurename\ \ref{fig:refl}.  It can be
observed from the figure that the accuracy of second-order reflections
depends on the particular reflection, but nonetheless the delays of the approximating paths are similar to the actual ones.  As
the reflection order increases, coarser approximations are made. 
Thus, the proposed network behaves equivalently 
to geometric-acoustic methods up to the first-order reflections, 
while providing a gracefully degrading approximation for higher-order ones.

\begin{figure}
\centering
\subfloat[]{
\includegraphics[width=0.25\textwidth]{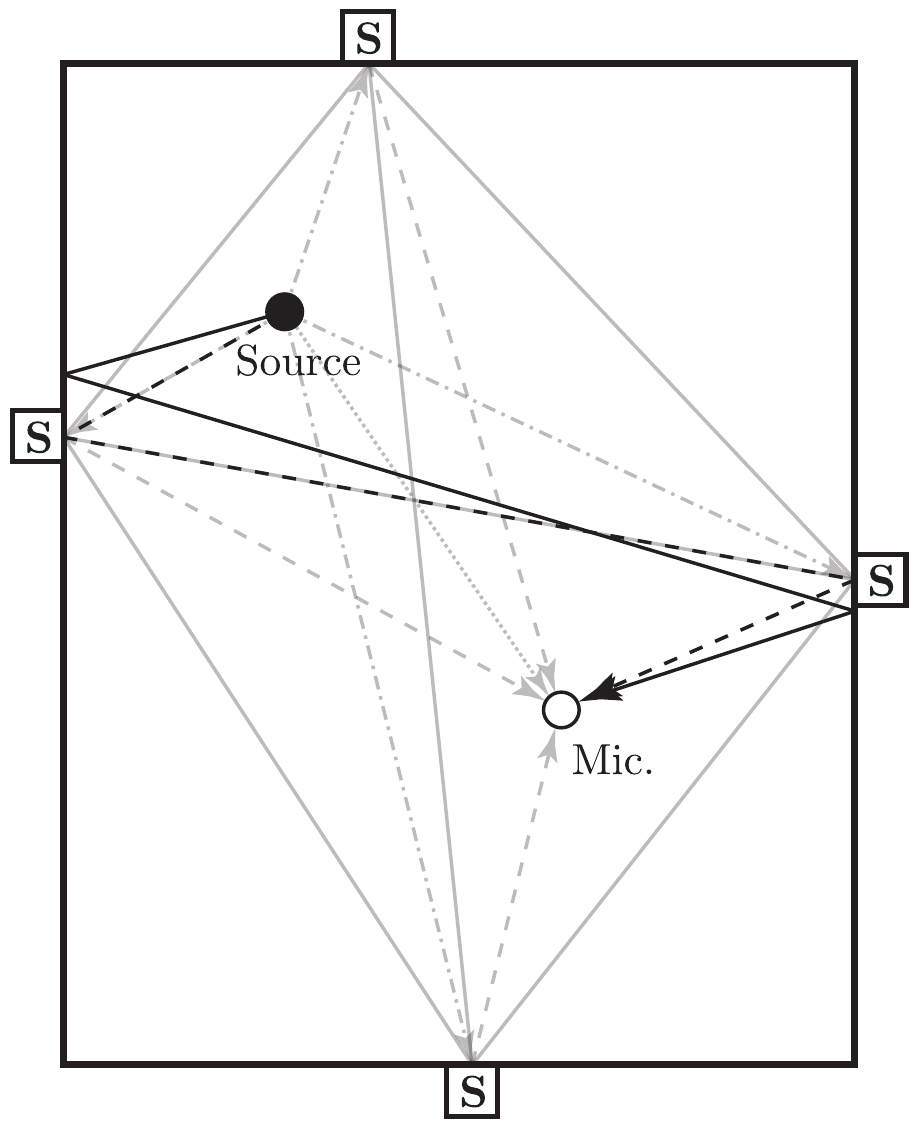}\label{fig:refl_1}}
\subfloat[]{
\includegraphics[width=0.25\textwidth]{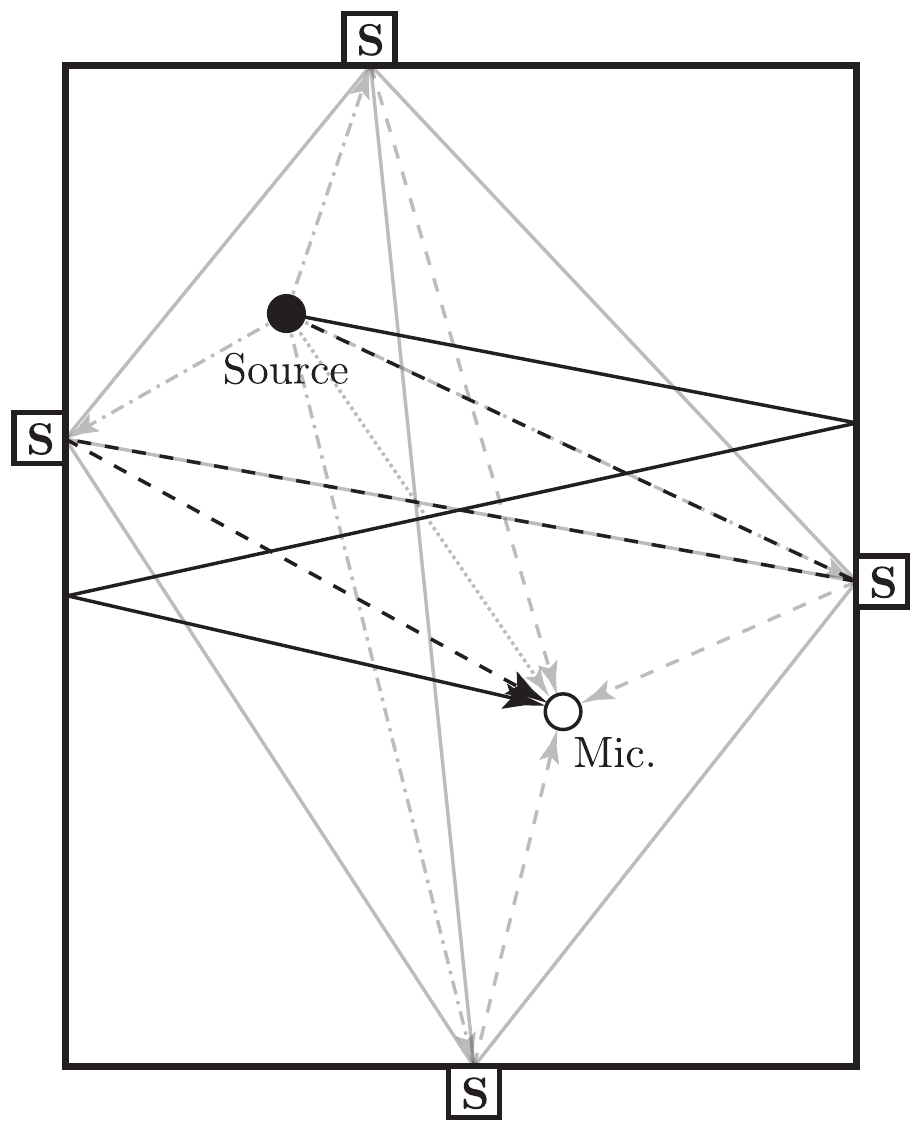}\label{fig:refl_2}}

\subfloat[]{
\includegraphics[width=0.25\textwidth]{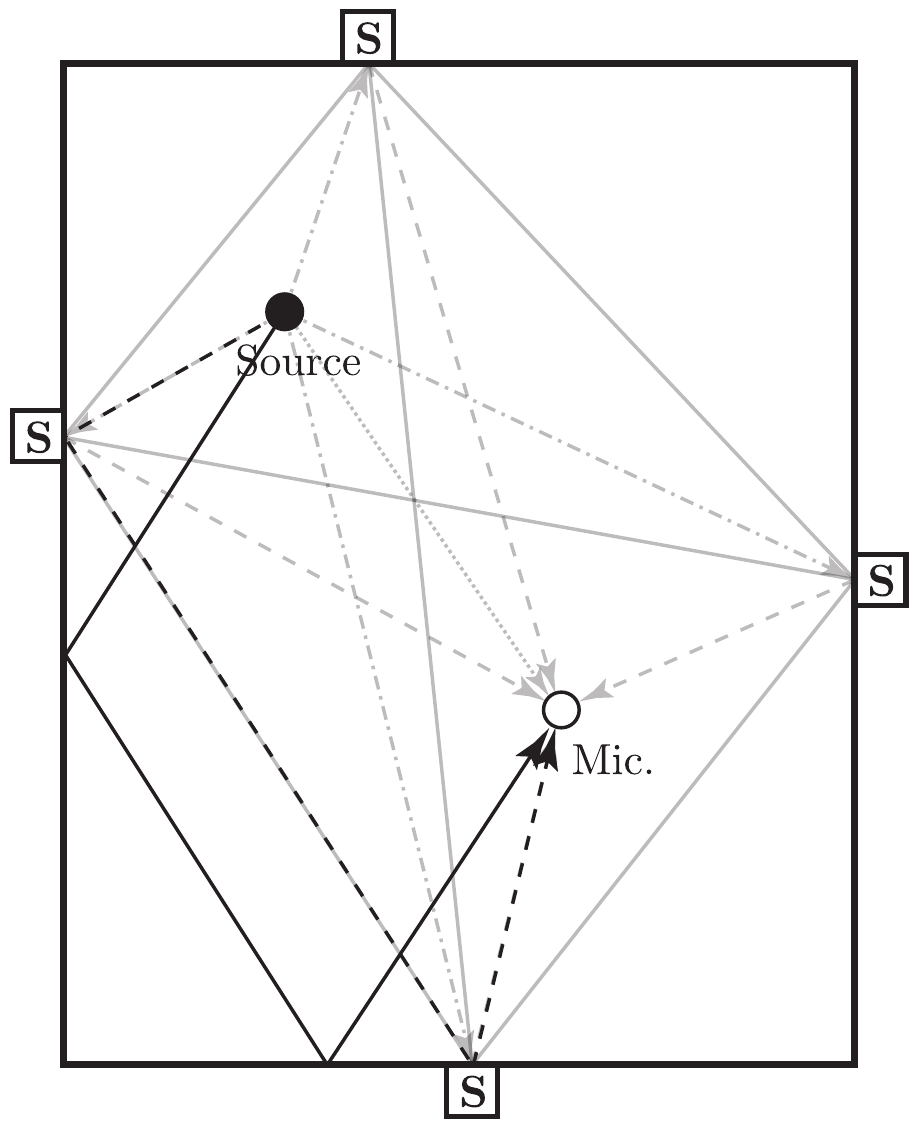}\label{fig:refl_3}}
\subfloat[]{
\includegraphics[width=0.25\textwidth]{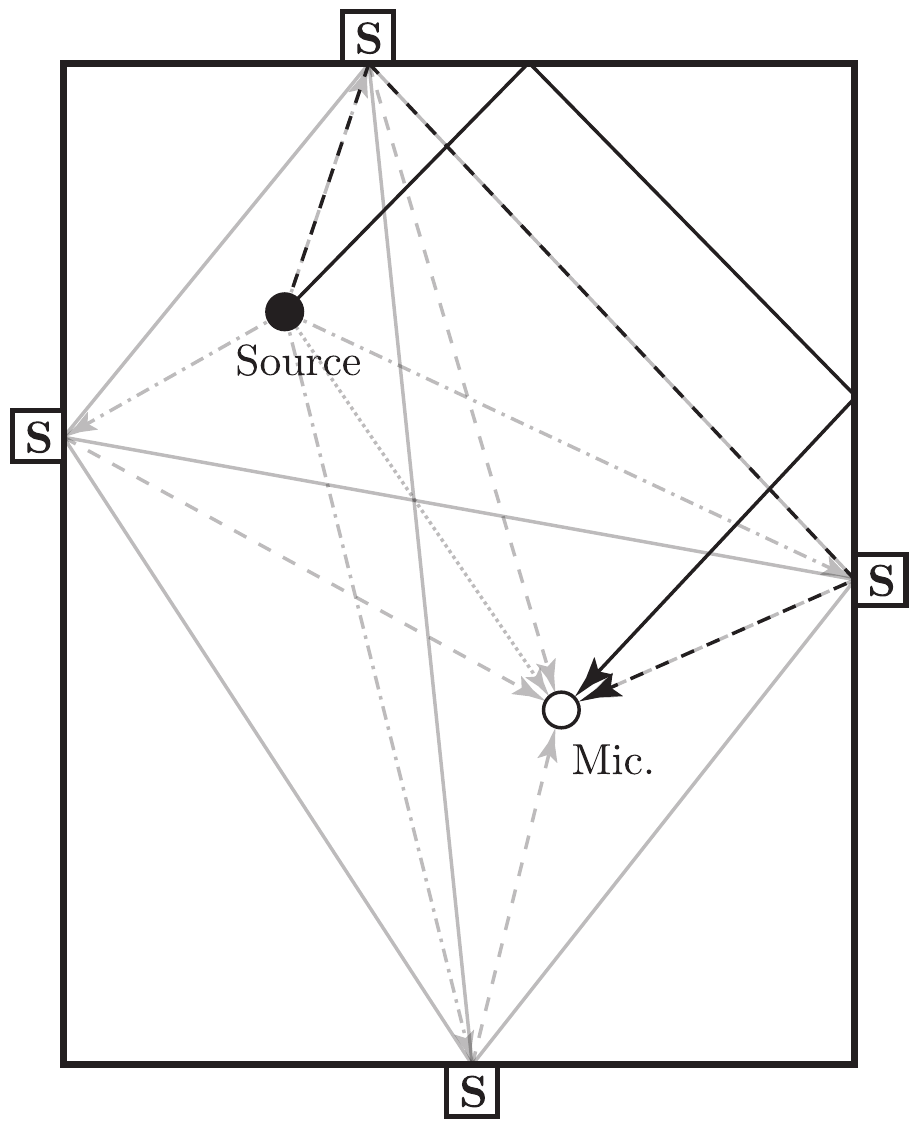}\label{fig:refl_4}}
\caption{Examples of approximations generated by \protect\gls{sdn} 
for four second-order reflections.
The solid black line represents the actual path of the second-order reflection, 
while the dashed line is the corresponding path within the \protect\gls{sdn}.}
\label{fig:refl}
\end{figure}

Precise details of the \gls{sdn} design are given below.

\textbf{Scattering nodes:} Each node is positioned on a wall of the 
modeled enclosure.  The nodes are positioned at the locations where
the first-order reflections impinge on the walls.  These locations are
straightforward to calculate for simple geometries, e.g.\ convex polyhedra.  
The nodes carry out a scattering operation on the inputs
from the other $K$ nodes, $\mathbf{p}^+$, to obtain the outputs 
as $\mathbf{p}^-[n]=\mathbf{S}\mathbf{p}^+[n]$,
where $\mathbf{S}$ is the $K\times K$ (not necessarily lossless) scattering matrix.  
Rectangular rooms, which are used in the following, correspond to $K=5$.
Other geometries where $K$ is a power or $2$ 
are computationally convenient since it can be shown that they  
lead to a multiplier-free realization~\cite{Smith-III:1985ib}.

The scattering matrix $\mathbf{S}$ governs how energy circulates
within the network. 
Since each node is associated with a wall,
it describes how energy is exchanged between walls. 
Furthermore, when incident waves $\mathbf{p}^+$ are scattered from the wall, 
a certain amount of energy is absorbed. 
A macroscopic quantity that describes material absorption 
sufficiently well for most synthesis applications is 
the random-incidence absorption coefficient~\cite{vorlander2008auralization}.
This coefficient, specified by  the  ISO 354 standard,
is known for a variety of materials~\cite{vorlander2008auralization}. 

The wall absorption effect can be expressed in the most general form as
$\mathbf{p}^{-*}\mathbf{Y}\mathbf{p}^-=(1-\alpha)\mathbf{p}^{+*}\mathbf{Y}\mathbf{p}^+$, where 
$\alpha$ is the wall absorption coefficient,
which is equivalent to
\begin{equation}
\mathbf{S}^*\mathbf{Y}\mathbf{S}=(1-\alpha)\mathbf{Y}~.
\label{eq:energyconstr}
\end{equation}
By expressing $\mathbf{S}$ as $\mathbf{S} = \beta\mathbf{A}$, where $\beta=\sqrt{1-\alpha}$, 
the relationship in~(\ref{eq:energyconstr}) becomes equivalent  to 
$\mathbf{A}^*\mathbf{Y}\mathbf{A}=\mathbf{Y}$, i.e. 
the scattering matrix $\mathbf{S}$ 
is the product of the wall reflection coefficient $\beta$
and a lossless scattering matrix $\mathbf{A}$.
As in \glspl{dwn}, if $\mathbf{A}$ is selected as given by (\ref{matrix:unnorm}) 
or (\ref{eq:house}), the propagating variables have a physical interpretation
as the pressure or root-power waves in a network of acoustic tubes.
Other non-physical choices of $\mathbf{A}$ are possible, 
as long as the lossless condition is satisfied; 
these are discussed  in Section~\ref{sec:smselection}.

As will be shown in Section~\ref{sec:rt}, 
setting the scattering matrix as $\mathbf{S} = \beta\mathbf{A}$
results in a rate of energy decay that is consistent with the well-known Sabine and Eyring formulas 
as well as with the results of the \gls{ism}. 
This can be attributed to the fact that both  
the average time between successive reflections (i.e. the \emph{mean free path}) 
and the energy loss at the walls in the virtual \gls{sdn} network 
are close to the ones observed in the corresponding physical room.

The absorption characteristic of most real materials is frequency-dependent.
This can be modeled by using a more general scattering 
matrix $\mathbf{S}(z) = \mathbf{H}(z)\mathbf{A}$, 
where $\mathbf{H}(z)=\text{diag}\{H(z),\dots,H(z)\}$, and $H(z)$ is a wall filter. 
The absorption coefficients  in consecutive octave bands, 
for a range of materials, are reported in~\cite{vorlander2008auralization}.
Standard filter design techniques can be used to fit 
the frequency response $H(e^{j \omega})$ to these tabulated values.
Minimum-phase IIR 
filters are particularly convenient in this context as they reduce computational
load without significantly affecting the phase  of  simulated
reflections~\cite{huopaniemi2002modeling}.

As in conventional~\glspl{dwm}, the pressure at the \gls{sdn} node is
a function of the incoming wave variables, $p_{i}^+[n]$, from 
neighboring nodes and the pressure, $p_{S}[n]$, injected by the source. 
That is modeled according to~(\ref{eq:nodepressure}),
and it is illustrated in \figurename\ \ref{fig:sourcenode}. Other details of source-to-node connections are discussed next.

 \begin{figure}[t]
\centering
\includegraphics[width=0.47\textwidth]{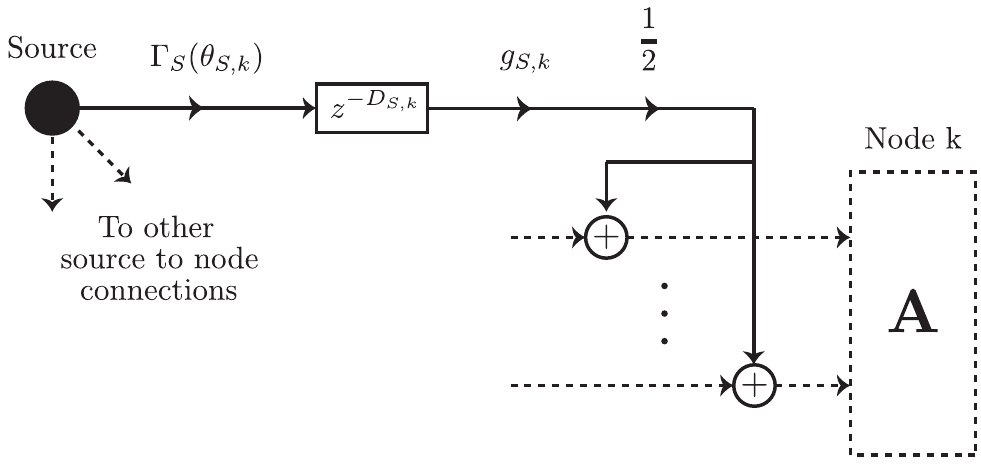}
\caption{Connection between the source node and an \protect\gls{sdn} node.}
\label{fig:sourcenode}
\end{figure}

\textbf{Source-to-node connection:} The input to the system is
provided by a source node connected to \gls{sdn} nodes via
unidirectional absorbing delay lines (see
\figurename\ \ref{fig:sourcenode}).

The delay of the line connecting the source at $\mathbf{x}_S$ and the
\gls{sdn} node positioned at $\mathbf{x}_k$ is given by the corresponding
propagation delay $D_{S,k} =
\lfloor{}F_s\|\mathbf{x}_S-\mathbf{x}_k\|/c\rfloor$, where $c$ is the
speed of sound and $F_s$ is the sampling frequency.
The attenuation due to spherical spreading ($1/r$ law) is modeled as
\begin{equation}
g_{S,k}=\frac{1}{\|\mathbf{x}_S-\mathbf{x}_k\|}.
\label{eq:gsk}
\end{equation}

Source directivity is another important simulation parameter in room acoustic synthesis. 
The sparse sampling of the simulated enclosure prohibits the simulation of
source directivity in detail.  However, a coarse approximation
can be incorporated by weighting the outgoing signals by
$\Gamma_S(\theta_{S,k})$, where $\Gamma_S(\theta)$ is the source
directivity, and $\theta_{S,k}$ is the angle between the source
reference axis and the line connecting the source and $k$-th node.
An alternative approach consists of using an average of the 
directivity pattern in some angular sector. It should be noted that it is possible to simulate frequency-dependent characteristics of source directivity using short, variable linear or minimum-phase filters. However, in order to keep the exposition in this article clear it is assumed that the directivity patterns are independent of frequency and can be modeled as simple gains.

\textbf{Node-to-node connection:} The connections between the \gls{sdn}
nodes consist of bidirectional delay lines modeling the propagation
path delay as shown in \figurename\ \ref{fig:internode}.
Additional low-pass filters can be inserted into the network at this point to model 
the frequency-dependent characteristic of air absorption, 
as proposed by Moorer  in the context of FDNs \cite{moorer1979}.

\begin{figure}[t]
\centering
\includegraphics[width=0.47\textwidth]{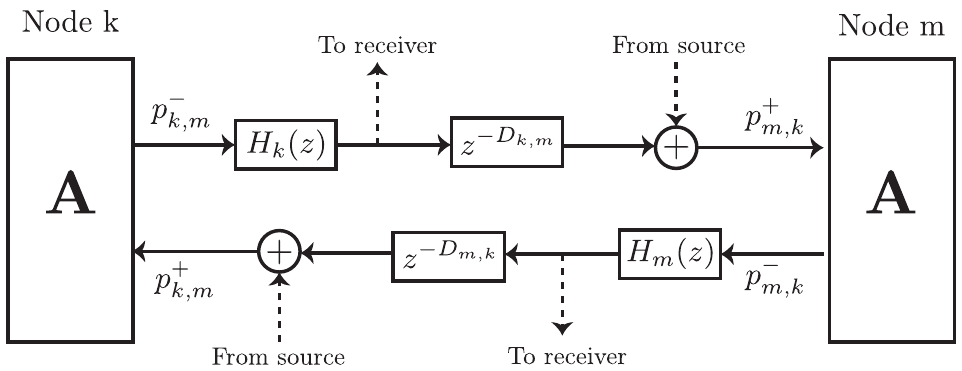}
\caption{Two interconnected SDN nodes.}
\label{fig:internode}
\end{figure}

The delays of the lines connecting the nodes are determined by their
spatial coordinates.
Thus, the delay of the line between a node at location $\mathbf{x}_k$
and a node at $\mathbf{x}_m$ is $D_{k,m}
=\lfloor{}F_s\|\mathbf{x}_k-\mathbf{x}_m\|/c\rfloor$.

\textbf{Node-to-microphone connection:} Each node is
connected to the microphone node via a unidirectional absorbing delay line.  
The signal extracted from the junction, $p_e[n]$, 
is a linear combination of outgoing wave
variables, $p_e[n]=\mathbf{w}^T\mathbf{p}^-[n]$, 
where $\mathbf{p}^-[n]$ is the wave vector after the wall filtering operation, 
as shown in \figurename~\ref{fig:nodemic}.

In the physical case where $\mathbf{A}$ 
is in the form~(\ref{matrix:unnorm}) or~(\ref{eq:house}), 
the outgoing signal to the microphone is taken as 
the node's pressure, as given by equations (\ref{eq:pressurep}) 
and (\ref{eq:nodepressure}):
\begin{equation}
p_e[n]=\frac{2}{\langle \mathbf{1}, \mathbf{y}\rangle}\mathbf{y}^T \mathbf{p}^-[n]~.
\label{eq:weightphysical}
\end{equation}
In the non-physical case,
various choices are available for extracting a signal from the junction.
The only condition that  the weights $\mathbf{w}$ need to satisfy
is that the cascade of pressure injection, scattering and extraction
does not alter the amplitude of first-order reflections.
Since the incoming wave vector $\mathbf{p}^+$
for a first-order reflection with amplitude $p_S$ is given by 
$\mathbf{p}^+=(\frac{p_S}{2}\mathbf{1})$ (see \figurename~\ref{fig:sourcenode})
and since $\mathbf{p}^-=\mathbf{S}\mathbf{p}^+$, 
this condition can be written as 
$\mathbf{w}^T\mathbf{S}\left(\frac{p_S}{2}\mathbf{1}\right)=\beta p_S$ or, equivalently, as
\begin{equation}
\mathbf{w}^T\mathbf{A}\mathbf{1}=2~.
\label{eq:condweights}
\end{equation}

A possible choice for $\mathbf{w}$ is a constant vector,
which is computationally convenient since it requires a single multiplication.
In this case, the constraint~(\ref{eq:condweights}) yields 
the unique solution~$\mathbf{w}=\frac{2}{\mathbf{1}^T\mathbf{A}\mathbf{1}}\mathbf{1}$.

The delay from the $k$-th \gls{sdn} node to the microphone node is
$D_{k,M} = \lfloor{}F_s\|\mathbf{x}_k-\mathbf{x}_M\|/c\rfloor$.  As
with the source directivity, the microphone directivity is also modeled
using a simple gain element $\Gamma_M(\theta_{k,M})$, where
$\Gamma_M(\theta)$ is the microphone directivity pattern and
$\theta_{k,M}$ is the angle between the microphone acoustical axis and
the $k$-th node.  This approach ensures that the microphone is
emulated correctly for the directions associated to the first-order
reflections. As with the source-node connections, the microphone directivity can also be modeled using short, variable linear or minimum-phase filters. However, simple gain elements are preferred in this article for clarity of presentation. Similarly, as with the source-node connection, an alternative approach consists of using an average of the directivity pattern in some angular sector.

\begin{figure}[t]
\centering \includegraphics[width=0.47\textwidth]{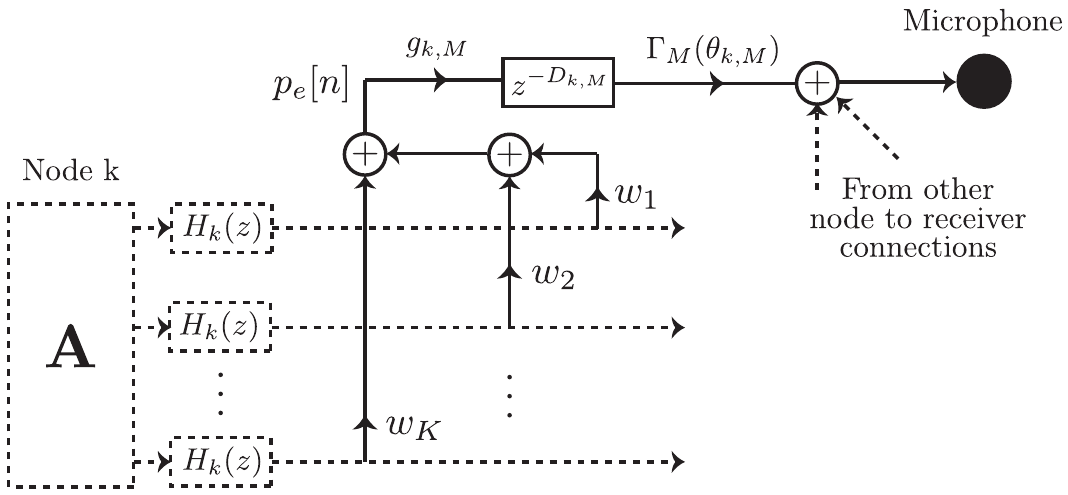}
\caption{Connection between an \protect\gls{sdn} node and the
  receiver/microphone node.}
\label{fig:nodemic}
\end{figure}

The attenuation coefficient is set as
\begin{equation}
g_{k,M}=\frac{1}{1+\frac{\|\mathbf{x}_k-\mathbf{x}_M\|}{\|\mathbf{x}_S-\mathbf{x}_k\|}},
\end{equation}
such that, using (\ref{eq:gsk}),
\begin{equation}
g_{S,k}\times
g_{k,M}=\frac{1}{\|\mathbf{x}_S-\mathbf{x}_k\|+\|\mathbf{x}_k-\mathbf{x}_M\|},
\label{eq:reflectmultiplication}
\end{equation}
which yields the correct attenuation for the first-order reflection
according to the $1/r$ law.  Notice that the above choice of $g_{k,M}$
and $g_{S,k}$ is not unique in satisfying the constraint
(\ref{eq:reflectmultiplication}).  The attenuation can in fact be
distributed differently between the source-to-node and
node-to-microphone branches 
but with little impact on overall energy decay.

\section{Properties of SDNs}
\label{sec:propsdn}

\subsection{Transfer function}
\label{sec:transffunc}

\begin{figure*}
\centering \includegraphics[width=0.9\textwidth]{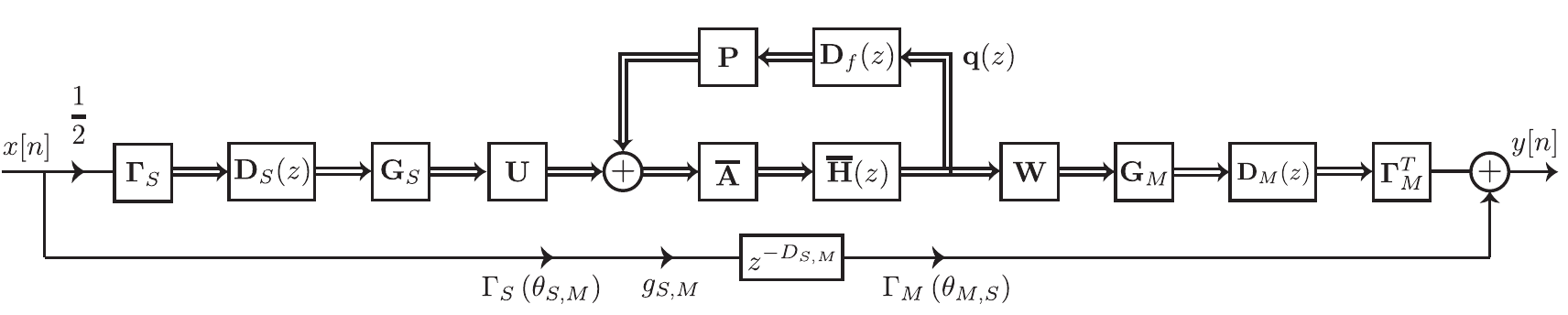}
\caption{Block diagram of the \protect\gls{sdn} reverberator.}
\label{fig:sdn}
\end{figure*}

The block diagram of an \gls{sdn} system is shown in
\figurename\ \ref{fig:sdn}. In the figure,
\begin{align}
\vec{\Gamma}_S&=\left[\Gamma_{S}\left(\theta_{S,1}\right),
\Gamma_{S}\left(\theta_{S,2}\right),
\dots,\Gamma_{S}\left(\theta_{S,K+1}\right)\right]^T~,
\label{eq:source_dir} \\
\mathbf{D}_S(z)&=\textrm{diag}\left\{z^{-D_{S,1}},
z^{-D_{S,2}},
\dots,z^{-D_{S,K+1}}\right\}~,
\label{eq:source_delay}\\
\mathbf{G}_S&=\textrm{diag}\left\{g_{S,1},
g_{S,2},
\dots,g_{S,K+1}\right\}~,
\label{eq:source_gain}
\end{align}
are  the source directivity vector, the source delay matrix, and
the source attenuation matrix, respectively.  The corresponding quantities
associated with the microphone $\vec{\Gamma}_M, \mathbf{D}_M(z)$ and
$\mathbf{G}_M$ are defined as in (\ref{eq:source_dir}),
(\ref{eq:source_delay}) and (\ref{eq:source_gain}), by substituting
$S$ with $M$.  
Further, 
\begin{align}
\mathbf{U}&=\textrm{diag}\Big\{\underbrace{\mathbf{1},\dots,\mathbf{1}}_{K+1}\Big\}~,\\
\mathbf{D}_f(z)&=\textrm{diag}\left\{z^{-D_{1,2}},\dots,z^{-D_{K,{K+1}}}\right\}~, \\
\overline{\mathbf{H}}(z)&=\textrm{diag}\Big\{\underbrace{H_{1}(z),\dots,H_{1}(z)}_{K},
\dots,H_{K+1}(z)\Big\}~,
\end{align}
are the block-diagonal matrix that distributes source signals to input wave variables,
the inter-node delay matrix,
and the wall absorption matrix, respectively.
The block diagonal scattering matrix $\overline{\mathbf{A}}$ and 
the block diagonal weight matrix $\mathbf{W}$ are defined as 
\begin{align}
\overline{\mathbf{A}}&=\textrm{diag}\left\{\mathbf{A}_1,\mathbf{A}_2,\dots\ ,\mathbf{A}_{K+1}\right\}~,\\
\mathbf{W}&=\textrm{diag}\left\{\mathbf{w}^T_1,\dots,\mathbf{w}^T_{K+1}\right\}~,
\end{align}
where $\mathbf{A}_k$ and $\mathbf{w}_k$ are the $i$-th node's scattering matrix 
and pressure extraction weights, respectively.
While these variables can in general be different for each node, 
in all the simulations presented in this paper they are selected to be equal, 
$\mathbf{A}_1=\dots=\mathbf{A}_{K+1}$
and $\mathbf{w}_1=\dots=\mathbf{w}_{K+1}$.

Finally, in \figurename\ \ref{fig:sdn}, $g_{S,M}$ and
$z^{-D_{S,M}}$ are the line-of-sight attenuation and delay,
respectively, 
and $\mathbf {P}$ is a permutation matrix whose elements are
determined based on the network topology.  Due to the
underlying node connectivity being bidirectional, this permutation
matrix is symmetric.  For the simplest case of a three-dimensional
enclosure with a rectangular shape, the permutation matrix takes
the form $\mathbf {P} = \delta_{i,f (j)}$, where $\delta_{i,j}$ is the
Kronecker delta, 
\begin{equation}
f(i) = ((6i-((i-1))_N-1))_{N(N-1)} + 1~,
\end{equation}
and $((\cdot))_N$ is the modulo-$N$ operation.
Inspection of \figurename~\ref{fig:sdn} reveals
that the system output can be expressed as
\begin{equation}
Y(z)=\mathbf{\Gamma}_M^T \mathbf{D}_M(z) \mathbf{G}_M \mathbf{W}
\mathbf{q}(z)+\overline{g}z^{-D_{S,M}}X(z)
\end{equation}
where
$\overline{g}={g}_{S,M}\Gamma_S\left(\theta_{S,M}\right)\Gamma_M\left(\theta_{M,S}\right)$,
and $\mathbf{q}(z)$ is the state vector, given by
\begin{equation}
\mathbf{q}(z)=\tfrac{1}{2}\left[\mathbf{I}-\overline{\mathbf{H}}(z)
  \mathbf{\overline{A}} \mathbf{P}
  \mathbf{D}_f(z)\right]^{-1}\overline{\mathbf{H}}(z) \mathbf{\overline{A}}
\mathbf{U}\mathbf{G}_S \mathbf{D}_S(z)\vec{\Gamma}_SX(z)~.
\label{eq:qz}
\end{equation}
 The transfer function can therefore be expressed as
\begin{equation}
H(z)=\tfrac{1}{K}\mathbf{k}_M^T(z)\left[\mathbf{\overline{A}}^{T}\overline{\mathbf{H}}^{-1}(z)-\mathbf{P}
  \mathbf{D}_f(z)\right]^{-1}\mathbf{k}_S(z)+\overline{g}z^{-D_{S,M}}~,
\label{eq:transferfunct}
\end{equation}
where $\mathbf{k}_M^T(z)=\mathbf{\Gamma}_M^T
\mathbf{D}_M(z)\mathbf{G}_M \mathbf{W}$ and
$\mathbf{k}_S(z)=\mathbf{U}\mathbf{G}_S\mathbf{D}_S(z)\mathbf{\Gamma}_S$.

It may be observed that, unlike \gls{fdn} reverberators, relevant
acoustical aspects, such as the direct path and reflection delays, are
modeled explicitly,  allowing complete control of 
source and microphone positions and their directivity patterns.

Expressing the system transfer function as given above allows for a
complete demarcation of the directional properties and positions of
the source and microphone, wall absorption characteristics and room
shape and size. This is especially useful in keeping computational
cost low for cases where only a single aspect, such as  source
orientation, changes.

\subsection{Stability}

The stability of the \gls{sdn} follows from the fact that its recursive part, 
i.e. the backbone formed by the \gls{sdn} nodes, is a fully connected~\gls{dwn}. 
The stability of lossless \glspl{dwn} is, in turn, guaranteed by the fact 
that the network has a physical interpretation as a network of acoustic tubes~\cite{smith1997aspects},\cite{Rocchesso:2002rq}.
Indeed, the ideal physical system's total energy provides a Lyapunov function bounding the sum-of-squares in the numerical simulation (provided one uses rounding toward zero or error feedback in the computations)~\cite{smith1997aspects}.
Furthermore, the network conserves its stability when losses due to wall absorption are included at the \gls{sdn} nodes since the physical pressure (i.e. the sum of incoming and outgoing pressure waves) is then always reduced relative to the lossless case.

\subsection{Lossless Scattering Matrices}
\label{sec:smselection}

In this section, we explore possible 
choices for lossless scattering matrices and discuss their implications.
First we present a complete parametrization of real lossless matrices.
The parameterization is an immediate corollary of the following   theorem.
\begin{theorem}
\label{theorem:param}
A real  square matrix $\mathbf{A}$ is diagonalizable  if and only if it has the  form 
\begin{equation}
\mathbf{A}=\mathbf{T}^{-1}\mathbf{\Lambda} \mathbf{T}~,
\label{eq:aform}
\end{equation}
where $\mathbf{T}$ is a real invertible matrix, and
$\mathbf{\Lambda}$ is a block diagonal matrix consisting of  $1\times 1$ blocks which are real eigenvalues of 
$\mathbf{A}$,  and $2\times 2$ blocks of the form 
$$
\arraycolsep=4pt\def\arraystretch{0.4}
  \begin{bmatrix}
    0 & -r_i  \\
    r_i & 2r_i\cos(\theta_i)
  \end{bmatrix}~,$$
  where $r_ie^{j\theta_i}$ are complex eigenvalues of $\mathbf{A}$ which appear in pairs with their conjugates.
 \end{theorem}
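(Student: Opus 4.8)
The plan is to establish both implications by reducing the claim to the real Jordan (block-diagonal) form of a diagonalizable real matrix and then showing that the prescribed $2\times 2$ blocks are precisely real representations of pairs of conjugate complex eigenvalues. For the ``if'' direction, suppose $\mathbf{A}=\mathbf{T}^{-1}\mathbf{\Lambda}\mathbf{T}$ with $\mathbf{\Lambda}$ of the stated shape. It suffices to show each block is diagonalizable over $\mathbb{C}$, since a block-diagonal matrix is diagonalizable iff every block is, and similarity preserves diagonalizability. The $1\times 1$ blocks are trivially diagonalizable. For a $2\times2$ block $B_i=\begin{bmatrix}0 & -r_i\\ r_i & 2r_i\cos\theta_i\end{bmatrix}$, I would compute its characteristic polynomial, $\lambda^2 - 2r_i\cos\theta_i\,\lambda + r_i^2$, whose roots are $r_i e^{\pm j\theta_i}$; these are distinct precisely when $\theta_i\notin\{0,\pi\}$ (which we may assume, since a genuine conjugate pair is non-real), so $B_i$ has two distinct eigenvalues and is therefore diagonalizable over $\mathbb{C}$. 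Hence $\mathbf{\Lambda}$, and therefore $\mathbf{A}$, is diagonalizable.

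For the ``only if'' direction, assume $\mathbf{A}$ is a real diagonalizable matrix. Its eigenvalues are either real or come in conjugate pairs $r_ie^{\pm j\theta_i}$, and because $\mathbf{A}$ is diagonalizable there is a basis of eigenvectors; for a non-real eigenvalue $\lambda=r_ie^{j\theta_i}$ with eigenvector $\mathbf{v}=\mathbf{u}+j\mathbf{w}$, the conjugate $\bar\lambda$ has eigenvector $\bar{\mathbf v}$, and the real span of $\{\mathbf{u},\mathbf{w}\}$ is a $2$-dimensional $\mathbf{A}$-invariant subspace on which $\mathbf{A}$ acts by a real $2\times 2$ matrix similar to $\mathrm{diag}(\lambda,\bar\lambda)$. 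Collecting one real eigenvector per real eigenvalue and one such pair $\{\mathbf{u}_i,\mathbf{w}_i\}$ per conjugate pair yields a real basis of $\mathbb{R}^n$; let $\mathbf{T}^{-1}$ be the matrix whose columns are these basis vectors, so that $\mathbf{T}\mathbf{A}\mathbf{T}^{-1}$ is block diagonal with $1\times1$ real blocks and $2\times 2$ real blocks $C_i$, each $C_i$ having characteristic polynomial $\lambda^2-2r_i\cos\theta_i\,\lambda+r_i^2$. The remaining step is to choose the ordering/scaling of $\mathbf{u}_i,\mathbf{w}_i$ so that $C_i$ takes the specific companion-like form in the statement: since any two real $2\times 2$ matrices with the same (non-real-rooted) characteristic polynomial are similar, I would conjugate each $C_i$ by an appropriate real $2\times2$ matrix $\mathbf{M}_i$ (absorbing the $\mathbf{M}_i$ into $\mathbf{T}$), noting that the displayed matrix is exactly the companion matrix of $\lambda^2-2r_i\cos\theta_i\,\lambda+r_i^2$ and hence realizes that similarity class.

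The main obstacle is the bookkeeping in the ``only if'' direction: one must carefully verify that the real and imaginary parts of a complex eigenvector are linearly independent over $\mathbb{R}$ (true because $\lambda$ is non-real, else $\mathbf{v}$ and $\bar{\mathbf v}$ would be dependent), that the collection of all such real vectors across all eigenvalues is a basis (this follows from the complex eigenvectors forming a basis together with the pairing of conjugates), and that each resulting $2\times 2$ real block is genuinely similar to the prescribed companion form rather than merely sharing its characteristic polynomial — which holds here only because the discriminant is negative, ruling out non-diagonalizable $2\times 2$ blocks. Everything else is routine linear algebra, and the explicit eigenvalue computation for the companion block makes the correspondence with $r_i e^{j\theta_i}$ transparent.
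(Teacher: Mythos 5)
Your proposal is correct, and the ``if'' direction coincides with the paper's: both of you observe that each $2\times 2$ block has characteristic polynomial $\lambda^2-2r_i\cos\theta_i\,\lambda+r_i^2$ with distinct non-real roots $r_ie^{\pm j\theta_i}$, so $\mathbf{\Lambda}$ is diagonalizable over $\mathbb{C}$ and hence so is $\mathbf{A}$. For the ``only if'' direction, however, you take a genuinely different route. The paper first proves a lemma (two diagonalizable matrices are similar if and only if they have the same eigenvalues), builds $\mathbf{\Lambda}$ directly from the spectrum of $\mathbf{A}$, concludes $\mathbf{A}$ and $\mathbf{\Lambda}$ are similar over $\mathbb{C}$, and then invokes the standard fact that two real matrices similar over $\mathbb{C}$ are similar over $\mathbb{R}$ to get a real $\mathbf{T}$; all eigenvector bookkeeping is hidden in those two cited results. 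You instead construct the real similarity explicitly: take real and imaginary parts $\mathbf{u}_i,\mathbf{w}_i$ of a complex eigenvector for each conjugate pair, assemble a real basis, obtain real $2\times 2$ blocks $C_i$, and then normalize each $C_i$ to the displayed companion form by a further real conjugation, which is legitimate precisely because the quadratic $\lambda^2-2r_i\cos\theta_i\,\lambda+r_i^2$ is irreducible over $\mathbb{R}$ (negative discriminant), so any real $2\times 2$ matrix with that characteristic polynomial is similar over $\mathbb{R}$ to its companion matrix (rational canonical form); your caveat about the repeated-root case is exactly the right one. The trade-off: your argument is more constructive and self-contained -- it effectively reproves the $\mathbb{C}$-to-$\mathbb{R}$ similarity descent in the only case needed ($2\times 2$ blocks) -- at the cost of the basis bookkeeping you flag, while the paper's argument is shorter and delegates that work to the lemma and to a cited similarity-over-subfield result. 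Both are valid proofs of the theorem.
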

\noindent The theorem is proved in the appendix. 

\begin{corollary}\label{cor:par}
A real square matrix is lossless if and only if it has the form given by Theorem \ref{theorem:param} with all eigenvalues on the unit circle.
\end{corollary}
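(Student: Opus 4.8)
The plan is to obtain the corollary by chaining two equivalences that are already available. The first is the characterization recalled in Section~\ref{sec:dwn} (due to~\cite{Rocchesso:2002rq}): a square matrix $\mathbf{A}$ is lossless --- i.e. satisfies~(\ref{eq:lossless}) for some Hermitian positive-definite $\mathbf{Y}$ --- if and only if it is diagonalizable over $\mathbb{C}$ (it admits a basis of eigenvectors) and all of its eigenvalues lie on the unit circle. The second is Theorem~\ref{theorem:param}, which states that a \emph{real} matrix is diagonalizable if and only if it has the form $\mathbf{A}=\mathbf{T}^{-1}\mathbf{\Lambda}\mathbf{T}$ with $\mathbf{T}$ real invertible and $\mathbf{\Lambda}$ the prescribed block-diagonal matrix. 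Composing the two immediately gives that a real $\mathbf{A}$ is lossless exactly when it has the form of Theorem~\ref{theorem:param} and, in addition, its spectrum lies on the unit circle --- which is the statement to be proved.

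The one step that needs a few words is checking that the unit-modulus condition can be read off from $\mathbf{\Lambda}$, so that it may be imposed on the block data directly. I would argue as follows: similarity preserves the spectrum, so $\mathbf{A}$ and $\mathbf{\Lambda}$ have the same eigenvalues, and because $\mathbf{\Lambda}$ is block-diagonal its spectrum is the union of the spectra of its blocks. Each $1\times 1$ block contributes its real entry, and a $2\times 2$ block $\left[\begin{smallmatrix} 0 & -r_i \\ r_i & 2r_i\cos\theta_i \end{smallmatrix}\right]$ has characteristic polynomial $\lambda^2 - 2r_i\cos\theta_i\,\lambda + r_i^2$ with roots $r_ie^{\pm j\theta_i}$. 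Hence requiring all eigenvalues of $\mathbf{A}$ to be unit-modulus is the same as requiring every real $1\times 1$ block to be $\pm 1$ and every $2\times 2$ block to have $r_i = 1$.

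I would then spell out the two directions explicitly: if $\mathbf{A}$ is lossless, the cited result gives both diagonalizability and a unit-circle spectrum, and Theorem~\ref{theorem:param} supplies the block representation; conversely, any real $\mathbf{A}$ written in that block form is diagonalizable by Theorem~\ref{theorem:param}, so together with the unit-circle hypothesis the cited result makes it lossless.

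The main obstacle here is presentational rather than mathematical: the whole argument rests on the quoted equivalence ``lossless $\Leftrightarrow$ diagonalizable with unit-circle spectrum,'' so I must be careful to invoke it in precisely that form, and to note that ``diagonalizable'' is meant over $\mathbb{C}$ --- which is consistent with Theorem~\ref{theorem:param}, since a genuine $2\times 2$ block there has the two distinct (complex-conjugate) eigenvalues $r_ie^{\pm j\theta_i}$, with $\theta_i\notin\{0,\pi\}$, and is therefore diagonalizable over $\mathbb{C}$. Beyond the $2\times 2$ characteristic polynomial, no computation is required.
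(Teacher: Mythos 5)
Your proposal is correct and follows essentially the same route as the paper, which disposes of the corollary in one line by combining Theorem~\ref{theorem:param} with the cited characterization that a matrix is lossless if and only if it is diagonalizable with all eigenvalues on the unit circle. Your extra remarks (reading the spectrum off the $1\times 1$ and $2\times 2$ blocks of $\mathbf{\Lambda}$, and noting diagonalizability is over $\mathbb{C}$) are accurate elaborations of that same argument rather than a different approach.
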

The corollary follows from the fact that a square matrix is lossless if and only if it is diagonalizable and all its eigenvalues are on the unit circle~\cite{Rocchesso:2002rq}.

Within the space of lossless matrices, a large degree of
leeway is left for pursuing various physical or perceptual criteria.
This can be achieved, for instance, 
by finding a lossless matrix~$\mathbf{A}$ which
minimizes the distance from a real matrix~$\mathbf{D}$ that reflects  some 
sought-after physical or perceptual properties. 
The design then amounts to 
constrained optimization:
\begin{equation}
\underset{\mathbf{A}}{\operatorname{min}} ~
\norm{\mathbf{A}-\mathbf{D}}^2_F ~,
\end{equation}
where $\norm{\cdot}_F$ denotes the Frobenius norm, under the constraint that
$\mathbf{A}$ has the form given in Corollary \ref{cor:par}. This most general case may, however,
involve optimization over an excessive number of parameters. If we restrict  the optimization domain to orthogonal matrices,
solutions can be found without the need for numerical procedures. In particular, the following Theorem result holds:

\begin{theorem} 
The solution to the following optimisation problem
\begin{equation}
\underset{\mathbf{A}}{\operatorname{argmin}} ~
\norm{\mathbf{A}-\mathbf{D}}^2_F ~,~\mathbf{A}^T\mathbf{A}=\mathbf{I}~
\end{equation}
is given by  $\mathbf{A}=\mathbf{U}\mathbf{V}^T$,
where $\mathbf{U}$ and $\mathbf{V}$ are respectively the matrices of
left and right singular vectors of $\mathbf{D}$.
\end{theorem}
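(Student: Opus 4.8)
The plan is to recognize this as the classical orthogonal Procrustes problem and to prove it by expanding the Frobenius norm and reducing the optimization to maximizing a trace over the orthogonal group. First I would write $\norm{\mathbf{A}-\mathbf{D}}_F^2 = \operatorname{tr}\big((\mathbf{A}-\mathbf{D})^T(\mathbf{A}-\mathbf{D})\big) = \operatorname{tr}(\mathbf{A}^T\mathbf{A}) - 2\operatorname{tr}(\mathbf{A}^T\mathbf{D}) + \operatorname{tr}(\mathbf{D}^T\mathbf{D})$, and then use the constraint $\mathbf{A}^T\mathbf{A}=\mathbf{I}$ to note that $\operatorname{tr}(\mathbf{A}^T\mathbf{A}) = \operatorname{tr}(\mathbf{I})$ is constant and $\operatorname{tr}(\mathbf{D}^T\mathbf{D})$ does not depend on $\mathbf{A}$. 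Hence minimizing $\norm{\mathbf{A}-\mathbf{D}}_F^2$ over orthogonal $\mathbf{A}$ is equivalent to maximizing $\operatorname{tr}(\mathbf{A}^T\mathbf{D})$.

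Next I would introduce the singular value decomposition $\mathbf{D} = \mathbf{U}\mathbf{\Sigma}\mathbf{V}^T$, with $\mathbf{U},\mathbf{V}$ orthogonal and $\mathbf{\Sigma}=\operatorname{diag}(\sigma_1,\dots,\sigma_K)$, $\sigma_i\ge 0$. Substituting gives $\operatorname{tr}(\mathbf{A}^T\mathbf{D}) = \operatorname{tr}(\mathbf{A}^T\mathbf{U}\mathbf{\Sigma}\mathbf{V}^T) = \operatorname{tr}(\mathbf{V}^T\mathbf{A}^T\mathbf{U}\mathbf{\Sigma}) = \operatorname{tr}(\mathbf{Z}\mathbf{\Sigma})$, where $\mathbf{Z} = \mathbf{V}^T\mathbf{A}^T\mathbf{U}$ is orthogonal (a product of orthogonal matrices). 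Writing out the trace, $\operatorname{tr}(\mathbf{Z}\mathbf{\Sigma}) = \sum_{i=1}^{K} \sigma_i z_{ii}$, and since each row of an orthogonal matrix is a unit vector its entries satisfy $|z_{ii}|\le 1$, so $\operatorname{tr}(\mathbf{Z}\mathbf{\Sigma}) \le \sum_i \sigma_i$, with equality attained when $z_{ii}=1$ for all $i$, i.e. $\mathbf{Z}=\mathbf{I}$. Setting $\mathbf{Z}=\mathbf{I}$ yields $\mathbf{V}^T\mathbf{A}^T\mathbf{U}=\mathbf{I}$, hence $\mathbf{A}^T = \mathbf{V}\mathbf{U}^T$ and $\mathbf{A} = \mathbf{U}\mathbf{V}^T$, which is indeed orthogonal and therefore feasible; this establishes the claim.

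The main subtlety — the step I would be most careful about — is the bound $|z_{ii}|\le 1$ and the attainability of equality: one must justify that the diagonal entries of an orthogonal matrix are bounded by $1$ in absolute value (immediate from $\sum_j z_{ij}^2 = 1$) and confirm that $\mathbf{Z}=\mathbf{I}$ is simultaneously achievable for all indices, which it is precisely because $\mathbf{I}$ is itself orthogonal. A minor point worth a remark is uniqueness: the optimizer $\mathbf{A}=\mathbf{U}\mathbf{V}^T$ is unique when all singular values $\sigma_i$ are strictly positive (e.g. when $\mathbf{D}$ is invertible, which is the case of interest since we ultimately want $\mathbf{A}$ close to a meaningful lossless target); if some $\sigma_i$ vanish the maximizer is non-unique, but $\mathbf{A}=\mathbf{U}\mathbf{V}^T$ remains a valid solution. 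I would also note that if one instead restricts to rotation matrices ($\det\mathbf{A}=1$) a sign correction on the last column of $\mathbf{V}$ may be needed, but that refinement is outside the statement as given.
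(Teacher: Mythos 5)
Your proof is correct: it is the classical orthogonal Procrustes argument, reducing the problem via the Frobenius-norm expansion to maximizing $\operatorname{tr}(\mathbf{A}^T\mathbf{D})$, then using the SVD $\mathbf{D}=\mathbf{U}\mathbf{\Sigma}\mathbf{V}^T$ and the bound $|z_{ii}|\le 1$ for the orthogonal matrix $\mathbf{Z}=\mathbf{V}^T\mathbf{A}^T\mathbf{U}$, with equality at $\mathbf{Z}=\mathbf{I}$, i.e. $\mathbf{A}=\mathbf{U}\mathbf{V}^T$. Note that the paper does not prove this theorem itself; it simply cites an external reference, so your self-contained derivation (including the correct caveats about attainability when some $\sigma_i=0$ and about uniqueness) is essentially the standard argument that the cited source relies on, and nothing in it conflicts with the paper.
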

\noindent A proof of this result can be found in~\cite{zhang2000flexible}.

Orthogonal matrices which are also circulant have the interpretation of performing all-pass circulant convolution
of the incoming variables, which as discussed below, reduces computational complexity of the scattering operator.
Furthermore, if a certain distribution of (unit-norm) eigenvalues is sought,
the associated circulant matrix can be found by means of a single
inverse \gls{fft}~\cite{Rocchesso:2002rq}.
An in-depth study of such scattering matrices in the context of \gls{dwn} reverberators has been presented in \cite{Rocchesso:2002rq}.

Householder reflection matrices, given by $\mathbf{A}=2\vec{v}\vec{v}^T-\mathbf{I}, ~\|\vec{v}\|=1$, are the subclass of
orthogonal scattering matrices commonly used in the context of \gls{dwn} reverberators.
As discussed in Section~\ref{sec:dwn}, they 
enable a physical interpretation of the propagating variables as normalized
pressure waves in a network of acoustic tubes.
The optimization problem that minimizes the distance from a 
targeted scattering matrix in this case can be expressed as  
\begin{equation}
\underset{\vec{v}}{\operatorname{argmin}}~
\norm{(2\vec{v}\vec{v}^T-\mathbf{I})-\mathbf{D}}^2_F
~,~\|\vec{v}\|=1~,
\label{eq:scatproblem}
\end{equation}
the solution of which is given by the following theorem.
\begin{theorem}
The solution to  the optimization problem in (\ref{eq:scatproblem}) 
is the singular vector corresponding to the largest singular value of  matrix
$\mathbf{D}+\mathbf{D}^T$.
\label{theorem:houseeigen}
\end{theorem}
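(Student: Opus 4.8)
The plan is to expand the Frobenius-norm objective, reduce it to a linear functional of $\vec{v}$ restricted to the unit sphere, and then recognize the optimum via a Rayleigh-quotient argument. First I would write $\norm{(2\vec{v}\vec{v}^T-\mathbf{I})-\mathbf{D}}^2_F = \norm{2\vec{v}\vec{v}^T}^2_F + \norm{\mathbf{I}+\mathbf{D}}^2_F - 2\,\langle 2\vec{v}\vec{v}^T,\,\mathbf{I}+\mathbf{D}\rangle_F$, where $\langle\mathbf{X},\mathbf{Y}\rangle_F = \operatorname{tr}(\mathbf{X}^T\mathbf{Y})$ is the Frobenius inner product. Using $\|\vec{v}\|=1$ one has $\norm{2\vec{v}\vec{v}^T}^2_F = 4\,\operatorname{tr}(\vec{v}\vec{v}^T\vec{v}\vec{v}^T) = 4$, which is constant on the constraint set, and $\norm{\mathbf{I}+\mathbf{D}}^2_F$ does not depend on $\vec{v}$ at all. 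Hence the optimization is equivalent to maximizing $\langle 2\vec{v}\vec{v}^T,\,\mathbf{I}+\mathbf{D}\rangle_F = 2\,\operatorname{tr}\!\big(\vec{v}\vec{v}^T(\mathbf{I}+\mathbf{D})\big) = 2\,\vec{v}^T(\mathbf{I}+\mathbf{D})\vec{v}$ over $\|\vec{v}\|=1$.

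Next I would symmetrize: since $\vec{v}^T\mathbf{M}\vec{v} = \vec{v}^T\mathbf{M}^T\vec{v}$ for any matrix $\mathbf{M}$, we have $\vec{v}^T(\mathbf{I}+\mathbf{D})\vec{v} = \tfrac{1}{2}\vec{v}^T\big(2\mathbf{I}+\mathbf{D}+\mathbf{D}^T\big)\vec{v} = \vec{v}^T\vec{v} + \tfrac{1}{2}\vec{v}^T(\mathbf{D}+\mathbf{D}^T)\vec{v} = 1 + \tfrac{1}{2}\vec{v}^T(\mathbf{D}+\mathbf{D}^T)\vec{v}$. So, dropping the additive constant, the problem reduces to maximizing the quadratic form $\vec{v}^T(\mathbf{D}+\mathbf{D}^T)\vec{v}$ subject to $\|\vec{v}\|=1$. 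By the variational (Courant–Fischer / Rayleigh-quotient) characterization of eigenvalues of the real symmetric matrix $\mathbf{D}+\mathbf{D}^T$, the maximizer is a unit eigenvector associated with its largest eigenvalue; since $\mathbf{D}+\mathbf{D}^T$ is symmetric, its singular vectors coincide (up to sign) with its eigenvectors, and the singular vector for the largest singular value is the eigenvector for the largest eigenvalue precisely when that eigenvalue is nonnegative — which I would note holds in the relevant regime, or else state the result directly in terms of the top eigenvector of $\mathbf{D}+\mathbf{D}^T$ as the paper does. This gives $\vec{v}$ as claimed.

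The only genuinely delicate point — and the one I would be most careful about — is the identification of ``singular vector of largest singular value'' with ``eigenvector of largest eigenvalue'' for $\mathbf{D}+\mathbf{D}^T$: these agree for the top eigenpair exactly when the largest eigenvalue of $\mathbf{D}+\mathbf{D}^T$ is nonnegative. One should also check the sign ambiguity in $\vec{v}$ is harmless, which is immediate because the Householder matrix $2\vec{v}\vec{v}^T-\mathbf{I}$ is invariant under $\vec{v}\mapsto-\vec{v}$. Everything else is routine trace algebra; no numerical procedure is needed, which is exactly the payoff advertised before the theorem statement.
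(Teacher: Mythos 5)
Your proposal is correct and follows essentially the same route as the paper's own proof: both expand the objective (you via the Frobenius inner product, the paper via direct substitution of the Householder form), discard the terms that are constant on the unit sphere, and reduce the problem to maximizing $\tfrac{1}{2}\vec{v}^T(\mathbf{D}+\mathbf{D}^T)\vec{v}$ subject to $\|\vec{v}\|=1$, solved by the Rayleigh-quotient characterization. Your caveat that the maximizer is the eigenvector of the \emph{algebraically} largest eigenvalue of $\mathbf{D}+\mathbf{D}^T$, which coincides with the top singular vector only when that eigenvalue is also largest in magnitude, is a valid refinement of a point the paper's wording glosses over.
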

\noindent The theorem is proved in the appendix.

The  isotropic scattering matrix, {\sl i.e.} the particular case of a Householder reflection obtained for 
$\vec{v}= \mathbf{1}/\sqrt{K}$, can be physically interpreted as the scattering matrix 
which 
takes a reflection from one node and distributes its energy equally
among all other nodes. This is the only orthogonal matrix which has this
property, as stated by the following theorem. 
\begin{theorem}
If $\mathbf{A}$ is an orthogonal matrix which scatters the
energy from each incoming direction uniformly among all other
directions, then it must have the form $\mathbf{A} = \pm
\frac{2}{K}\mathbf{1}\mathbf{1}^T-\mathbf{I}$.
\label{theorem:isotropic}
\end{theorem}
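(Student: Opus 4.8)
The plan is to first pin down what the hypothesis asserts. Writing $\mathbf{A}=(A_{ij})_{i,j=1}^{K}$, the condition ``scatters the energy from each incoming direction uniformly among all other directions'' means that for the canonical excitation $\mathbf{e}_j$ the scattered vector $\mathbf{A}\mathbf{e}_j$ carries equal weight to every direction $i\neq j$; that is, there is a single scalar $b$ with $A_{ij}=b$ for all $i\neq j$, and $b\neq 0$ because a scattering matrix by hypothesis exchanges energy between directions (if $b=0$ no energy ever leaves its original direction). Crucially I would \emph{not} assume the self-scattering coefficients $a_j:=A_{jj}$ are equal; that will be forced by orthogonality. So the starting point is $\mathbf{A}=b\,\mathbf{1}\mathbf{1}^T+\mathbf{\Delta}$ with $\mathbf{\Delta}=\textrm{diag}(a_1-b,\dots,a_K-b)$, and I will take $K\geq 3$ (the relevant case; rectangular rooms give $K=5$).

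Next I would read off the consequences of $\mathbf{A}^T\mathbf{A}=\mathbf{I}$ column by column. Unit norm of the $j$-th column gives $a_j^2+(K-1)b^2=1$, so $a_j^2$ is the same for every $j$. Orthogonality of columns $j$ and $k$ with $j\neq k$ gives $a_jb+a_kb+(K-2)b^2=0$, and since $b\neq 0$ this reads $a_j+a_k=-(K-2)b$ for every pair $j\neq k$; subtracting two such identities that share an index (possible because $K\geq 3$) yields $a_j=a_k$, so all the $a_j$ coincide, say $a_j\equiv a$, with $2a=-(K-2)b$.

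Then the conclusion is a short substitution: plugging $a=-(K-2)b/2$ into $a^2+(K-1)b^2=1$ collapses the left-hand side to $\tfrac{K^2}{4}\,b^2$, so $b=\pm\tfrac{2}{K}$ and correspondingly $a-b=\mp 1$. Reassembling, $\mathbf{A}=b\,\mathbf{1}\mathbf{1}^T+(a-b)\mathbf{I}=\pm\big(\tfrac{2}{K}\mathbf{1}\mathbf{1}^T-\mathbf{I}\big)$, and I would finish by noting that both matrices are indeed orthogonal and do satisfy the hypothesis — they are precisely the isotropic scattering matrix of Section~\ref{sec:smselection} and its negative — so the characterization is sharp. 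There is no genuinely difficult step; the only points worth stating carefully are the formalization of ``uniformly among all other directions'' together with the exclusion of the degenerate case $b=0$, and the mild surprise that one need not postulate equal diagonal entries, since the column-orthogonality relations force the diagonal of an orthogonal matrix with constant off-diagonal to be constant by itself. Everything after that is bookkeeping with $\mathbf{I}$ and $\mathbf{1}\mathbf{1}^T$.
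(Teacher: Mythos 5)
Your proposal is correct and follows essentially the same route as the paper's proof: impose constant off-diagonal entries from the uniform-scattering hypothesis, write down the column norm and column orthogonality constraints of $\mathbf{A}^T\mathbf{A}=\mathbf{I}$, exclude the degenerate $a_0=0$ case, and solve to get $a_0=\pm 2/K$ and diagonal $\pm(2-K)/K$. The only cosmetic difference is how the diagonal entries are shown to be equal (you subtract pairwise sum equations, the paper argues equal magnitude from the norm constraint and then equal sign), which amounts to the same bookkeeping.
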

\noindent The theorem is proved in the appendix.
The isotropic matrix thus satisfies some optimality criteria and, as discussed below, allows for
fast implementation. We will see in Section~\ref{sec:echo} that its special
structure leads to a fast build up of echo density which is a 
perceptually desirable quality \cite{huang2007aspects}.

These different choices of scattering matrices have their implications on computational complexity.
The computational complexity of the matrix-vector multiplication using  general
lossless and orthogonal matrices is $O[K^2]$ operations. 
Circulant lossless matrices require only 
$O[K\log(K)]$ operations~\cite{Rocchesso:2002rq}.
The computational complexity associated with  matrices
which have the form of Householder reflections is further reduced to $O[K]$.
Among general Householder reflections, which require $2K-1$
additions and $2K$ multiplications, 
the isotropic scattering matrix only requires 
$2K-1$ additions and $1$ multiplication.
Among lossless matrices, 
the ones that require the least operations are permutation matrices.
However, we will see in Section~\ref{sec:echo} that permutation matrices  
lead to an insufficient echo density.

\subsection{Interactivity and multichannel auralization}

Interactive operation of the \gls{sdn} reverberator is accomplished by
updating the model to reflect changes in the positions and rotations of
the source and microphone.  This requires readjusting the positions of the
wall nodes, and updating the delay line lengths and gains accordingly.
It was observed in informal listening tests that updating the delay
line lengths did not cause audible artifacts as long as microphone and
source speeds are within reasonable limits.

The proposed model allows approximate emulation of coincident and non-coincident
recording formats.  Coincident formats
(e.g. Ambisonics~\cite{Gerzon:1985rw},
\gls{hoa}~\cite{Poletti:2000dk}, \gls{vbap}~\cite{Pulkki:1997jt}) can
be easily employed by adjusting the microphone gains
$\Gamma_{M,k}(\theta)$ appropriately.

Non-coincident formats (e.g.~\cite{Johnston:2000oe, desena2013}) 
can be emulated by considering a
separate \gls{sdn}  for each microphone. 
This results in a higher inter-channel decorrelation,
which is what would actually occur in real recordings. However, the overall
computational load also increases.  If 
simulation speed is critical, an alternative approach would be to share the same
set of wall nodes among all the \glspl{sdn}, while creating
dedicated node-to-microphone connection lines for each microphone.

\subsection{Computational load}

This section presents an analysis of the computational complexity of the proposed model
in comparison to the two conceptually closest technologies, \gls{fdn} and the \gls{ism}.
We use the number of floating point operations per second 
as an approximate indicator of the overall computational complexity.
Furthermore, to simplify calculations 
we make the assumption that additions and multiplications
carry the same cost.
This approximation is motivated by 
the progressive convergence of computation time of various operations in 
modern mathematical processing units.

The number of \gls{flops} performed by an \gls{sdn} 
can be calculated as $F_s \left[2K^3+(P+2)K^2+K + 1\right]$,
where $P$ is the number of operations required by each wall filter for each sample. 
Consider now an \gls{fdn} with a $Q\times Q$ feedback matrix.
From inspection of \figurename~\ref{fig:jotblock} 
the computational complexity can be shown to be $F_s \left[2Q^2+(P+3)Q+1\right]$ 
\gls{flops}.
\begin{figure}
\centering
\includegraphics[width=0.47\textwidth]{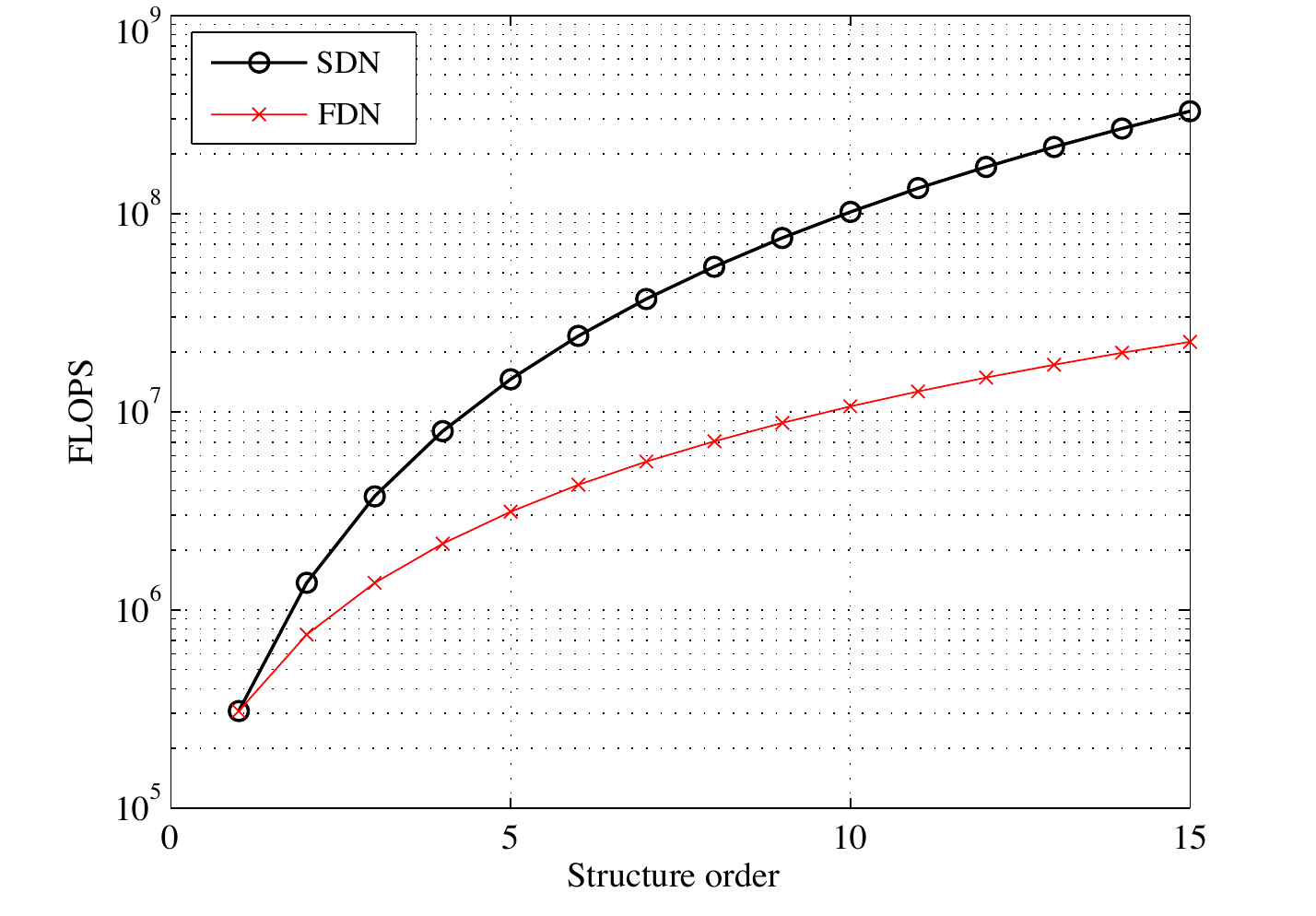}
\caption{
Comparison of computational complexity for \protect\gls{sdn} and \protect\gls{fdn} as a function of the structure order, 
i.e. the size of the feedback matrix  for \protect\gls{fdn}
and the number of neighboring nodes $K$ for \protect\gls{sdn}
(notice that the case of rectangular rooms corresponds to $K=5$).
The sampling frequency is $F_s=44100$ Hz.}
\label{fig:plot_perf_2}
\end{figure}
\figurename~\ref{fig:plot_perf_2} shows a comparison of the computational complexity of 
\glspl{sdn} and \glspl{fdn}.
In this figure, the x-axis denotes the structure order, which we define as 
the number of neighboring nodes $K$ for \gls{sdn} 
and the size of the feedback matrix $Q$ for \gls{fdn}.
The sampling frequency is set to $F_s = 44100$ Hz, 
and the filtering step consists of a simple frequency independent gain ($P=1$)
for both \gls{sdn} and \gls{fdn}.
It may be observed that for the typical case of a 3D rectangular room ($K=5$),
\gls{sdn} has around the same
computational complexity of an \gls{fdn} with a $12\times 12$ feedback matrix.
More specifically, \gls{sdn} for $K=5$ has a complexity of $14.6$ MFLOPS,
while \gls{fdn} for $Q=12$ has a complexity of $14.9$ MFLOPS.
Notice that 
the computational complexity of both \gls{sdn} and \gls{fdn} 
can be reduced significantly by using efficient lossless matrices 
of the type discussed in Section \ref{sec:smselection},
e.g. Householder and circulant lossless matrices.

Consider now the computational cost 
of the \gls{ism} method for rectangular rooms in its original 
time-domain implementation~\cite{Allen79b}.
In the \gls{ism}, each image source requires $10$ floating-point operations\footnote{Here we assume that exponentiations and square roots count as a single floating-point operation.}
to calculate the time index and $1/r$ attenuation of the image,
and $15$ floating-point operations to calculate its 
attenuation due to wall absorption.
This amounts to $25$ floating-point operations in total for each image source.
The number of image sources contained 
in an impulse response of length $T_{60}$ seconds
is approximately equal to the number of room cuboids that fit in
a sphere with radius $cT_{60}$ meters.
This gives a computational complexity of around 
\begin{equation}
25 \left\lceil\frac{\frac{4}{3}\pi(T_{60} c)^3}{L_x L_y L_z}\right\rceil~\text{\gls{flops}},
\label{eq:complism}
\end{equation} % ISM
where $L_x, L_y$ and $L_z$ are the room dimensions.
Here, we are implicitly assuming that the entire \gls{rir} is calculated using the \gls{ism}. 
While this ensures a fair comparison in terms of the other properties assessed in the next section, 
it should be noted that, in order reduce the complexity, most auralization methods
use the \gls{ism} only to simulate the early reflections. 
This is especially so in case of non-rectangular rooms, 
which require a considerably larger amount of memory 
and computational power than equation (\ref{eq:complism})~\cite{borish1984}. 
The rest of the \gls{rir} 
is usually generated using lightweight statistical methods,
e.g. random noise with appropriate energy decay~\cite{lehmann2010diffuse}.

Once the \gls{rir} has been generated, 
it has to be convolved with the input signal 
to obtain the reverberant signal. 
In the case of real-time applications, this can be done efficiently 
using the overlap-add method~\cite{oppenheim1989discrete}. 
The method calculates $2$ \glspl{fft} of length $N$, 
$N$ complex multiplications, $1$ inverse \gls{fft},
and $\left\lceil T_{60}F_s\right\rceil$ real additions for each time frame.
In order for the circular convolution to be equal to the linear convolution,
$N$ must satisfy 
$N\geq \left\lceil{F_s}/{F_r}\right\rceil+\left\lceil T_{60}F_s\right\rceil-1$,
where $F_r$ is the frame refresh rate.
In the best-case scenario where $N$ is a power of $2$, 
the asymptotic cost of each \gls{fft} is $6N\log_2N$~\cite{oppenheim1989discrete}.
Furthermore, each complex multiplication requires $4$ real multiplications and $2$ additions.
Overall, the computational complexity of the overlap-add method is 
$F_r(18N\log_2 N+6N +\left\lceil T_{60}F_s\right\rceil-1)$~\gls{flops}.
In the static case,
the \gls{fft} of the impulse response can be precomputed,
and the cost reduces to
$F_r(12N\log_2 N+6N +\left\lceil T_{60}F_s\right\rceil-1)$~\gls{flops}. % Overlap-add 

\begin{figure}
\centering
\includegraphics[width=0.47\textwidth]{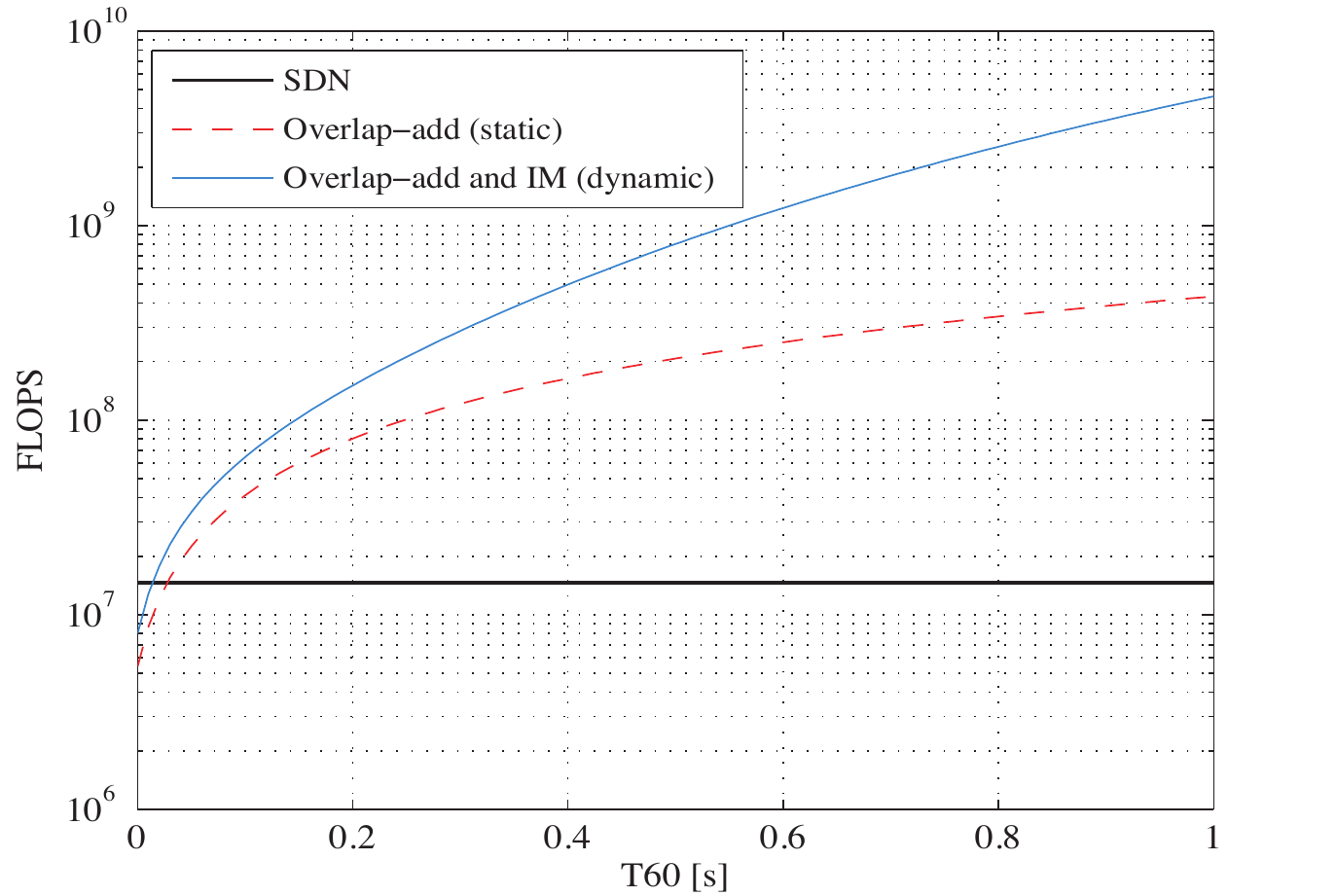}
\caption{Computational complexity of \protect\gls{sdn} in comparison to
overlap-add convolution in both static and dynamic modes as a function of reverberation time.
The static case represents the cost of the overlap-add convolution with a fixed, precomputed \protect\gls{rir}.
The dynamic case also includes the cost of calculating a new \protect\gls{rir}
and its \protect\gls{fft} for each time frame.
The sampling frequency is $F_s=44100$ Hz.}
\label{fig:plot_perf_1}
\end{figure}

\figurename~\ref{fig:plot_perf_1} compares the cost of \gls{sdn} with the
static and dynamic \gls{ism}.
In the static case, both microphone and source are not moving.
In the dynamic case, on the other hand, microphone and/or source are moving,
and the \gls{ism} is run for each frame.
The refresh rate is chosen such that the buffering delay is
shorter than the maximum latency of a half-frame delay between the video and audio, as recommended by the ITU Recommendation BR.265-9~\cite{itu265}.
For a video running at $25$ frames per second, 
this criterion gives a refresh rate of $F_r=50$ Hz.
The room size is the same as used in Allen and Berkley's paper of
$10 \times 15 \times 12.5$ feet~\cite{Allen79b}.
It may be observed that for typical medium-sized rooms, 
\gls{sdn} is from about $10$ to $100$ times faster than
dynamic \gls{ism}.
\gls{sdn} is also significantly faster than (overlap-add) convolution alone.

While we compare the computational complexity of the proposed algorithm with the standard overlap-add convolution, 
we also acknowledge that more efficient convolution methods have recently been proposed, e.g. \cite{Primavera:2014aa,atkins2013approximate,boholm2013}. 
However, the comparison of the proposed algorithm with these 
new methods is outside the scope of this article.

\subsection{Memory requirement}
The required memory  is determined by the number of taps of the
delay lines.
An upper bound for memory requirement $Q$ can be easily found by
observing that the length of each delay line is smaller than or equal
to the distance between the two farthest points of the simulated
space, giving:
\begin{equation}
Q\leq(N(N-1)+2N+1) \frac{qF_s}{c} R~~\text{bits}~,
\label{eq:mem}
\end{equation}
where $q$ is the number of bits per sample, and $R$ is the maximum
distance between any two points in the simulated space.  The value of
$R$ in the case of a rectangular room is $R=\sqrt{L_x^2+L_y^2+L_z^2}$.
For the more general case, $R$ is the diameter of the bounding sphere of the room shape.

Observe in (\ref{eq:mem}) that $Q$ scales linearly with the room size.
For a cubic room with a $5$~m edge, $F_s=40$~kHz, and $q=32$~bytes per
sample, the memory requirement is less than $170$~kB, which is
negligible for virtually every state-of-the-art platform.

\section{Assessment}\label{sec:results}
This section presents the results of assessments of \glspl{sdn} in terms of
perceptually-based objective criteria. 

As discussed in previous sections, 
first-order reflections are rendered correcty both in timing and amplitude by construction.
Another  cue important for the perception of room volume 
and materials is the reverberation time~\cite{rumsey2001spatial}.
Section~\ref{sec:rt} presents an evaluation of the reverberation time
both in frequency\nobreakdash-independent 
and frequency\nobreakdash-dependent cases.
Section~\ref{sec:echo} focuses on the
time evolution of echo density, 
which is related to the perceived 
time-domain texture of reverberation~\cite{huang2007aspects}.

Here, the \gls{ism} is used as a reference since it is 
the closest method among physical room models. 
More specifically, we use a C++ version of Allen and Berkley's
original time\nobreakdash-domain implementation~\cite{Allen79b}.

\subsection{Reverberation time}
\label{sec:rt}

The parameter most commonly  used to quantify the length of reverberation is $T_{60}$, 
which is defined as the time it takes for the room response 
to decay to 60 dB below its starting level.
In this section, the $T_{60}$ of the \gls{sdn} network 
is compared with 
two well-known empirical formulas 
proposed by Sabine~\cite{sabine1922collected} and Eyring \cite{eyring1930reverberation}:
\begin{align}
T_{60,Sab} &= \frac{0.161 V}{\sum_i A_i\alpha_i} \label{eq:sabine}~, \\
T_{60,Eyr} &= -\frac{0.161 V}{\left(\sum_i A_i\right)\log_{10}\left(1-\sum_i A_i\alpha_i/\sum_i A_i\right)}~, \label{eq:eyring}
\end{align}
where $V$ is the room volume, $A_i$ and $\alpha_i$ are 
the area and absorption coefficient of the $i$-th wall, respectively. 

\subsubsection{Frequency-independent wall absorption}

\begin{figure}
\centering
\includegraphics[width=0.5\textwidth]{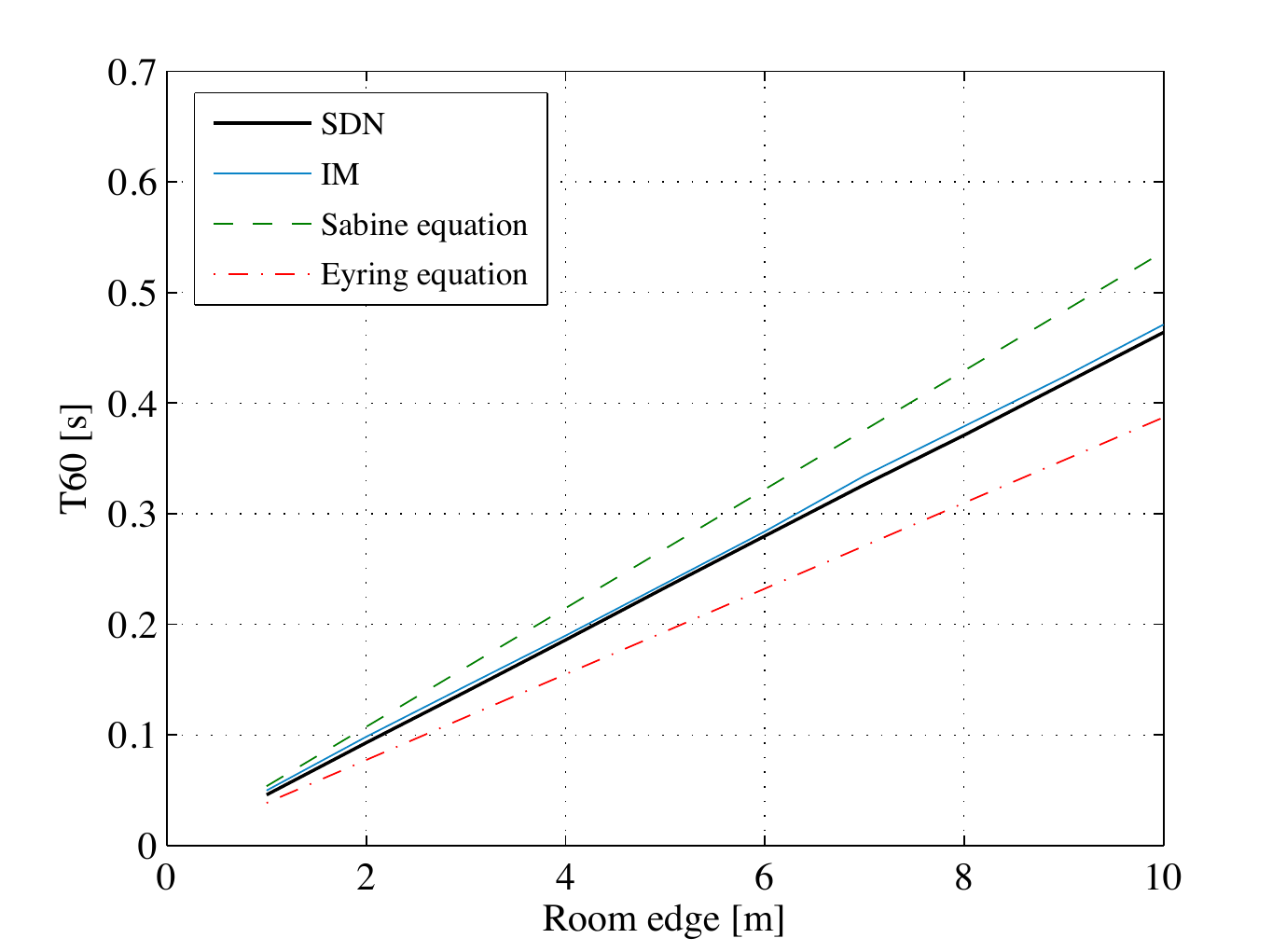}\label{fig:plot3_06}

\caption{Values of reverberation time $T_{60}$ as a function of the
  edge of a cubic room.  The wall absorption is $\alpha=0.5$, and both
  microphone and source are positioned in the volumetric center of the
  room.}\label{fig:plot3}
\end{figure}

Cubic rooms with different volumes and uniform frequency-independent
absorption are simulated.  In order to maintain the experimental 
conditions  across different room sizes, both the source and
the microphone are placed at the volumetric center of the room.  Furthermore, the
line-of-sight component was removed, as suggested in the
ISO 3382 standard \cite{standard19973382} for measuring the reverberation
time in small enclosures.  In \figurename~\ref{fig:plot3}, the
reverberation time is shown as a function of the edge length for a
room absorption coefficient $\alpha=0.5$.  It may be observed that the \gls{sdn}
generates room impulse-responses which  have reverberation times that
increase linearly with the edge length.  This is due to the larger
average distance between the nodes, 
which in turn increases the  mean free path of the structure.  

The $T_{60}$ values corresponding to the \gls{sdn} reverberator
are between the predictions given by Sabine and Eyring's
formulas and are nearly identical to the ones produced by the \gls{ism}.
The latter result may seem surprising if one considers that the  \gls{sdn}, as
opposed to the \gls{ism}, does 
not include attenuation due to spherical spreading 
(except for the initial first-order reflections and microphone taps).
This apparent inconsistency is resolved intuitively by observing that 
spherical spreading is a lossless phenomenon:
In the \gls{ism}, the quadratic energy decrease ($1/r^2$) is compensated 
by the quadratic increase of the number of contributing image sources over time, and
similarly to that,
in \gls{dwn} and \gls{sdn}, ``plane waves'' are scattered losslessly at each node, thus conserving the energy.

In \figurename~\ref{fig:rt60vsalpha}, the reverberation time is shown
as a function of the absorption coefficient $\alpha$.  The enclosure was taken
as a cubic room with a $5$~m edge, and results were averaged across 10 pairs of
source-microphone positions.
The coordinates were taken
from a uniform random distribution and satisfied both requirements set
in \cite{standard19973382}: The microphone was at least $1$ m away
from the nearest wall, and the distance between the source and microphone was
at least 
\begin{equation}
d_{min}=2\sqrt{\frac{V}{cT_{est}}}~,
\label{eq:ISOdist}
\end{equation}
where $T_{est}$ is a coarse estimation of the reverberation time.  In
these simulations, $T_{est}$ was set using Sabine's formula.

\begin{figure}
\centering
\includegraphics[width=0.5\textwidth]{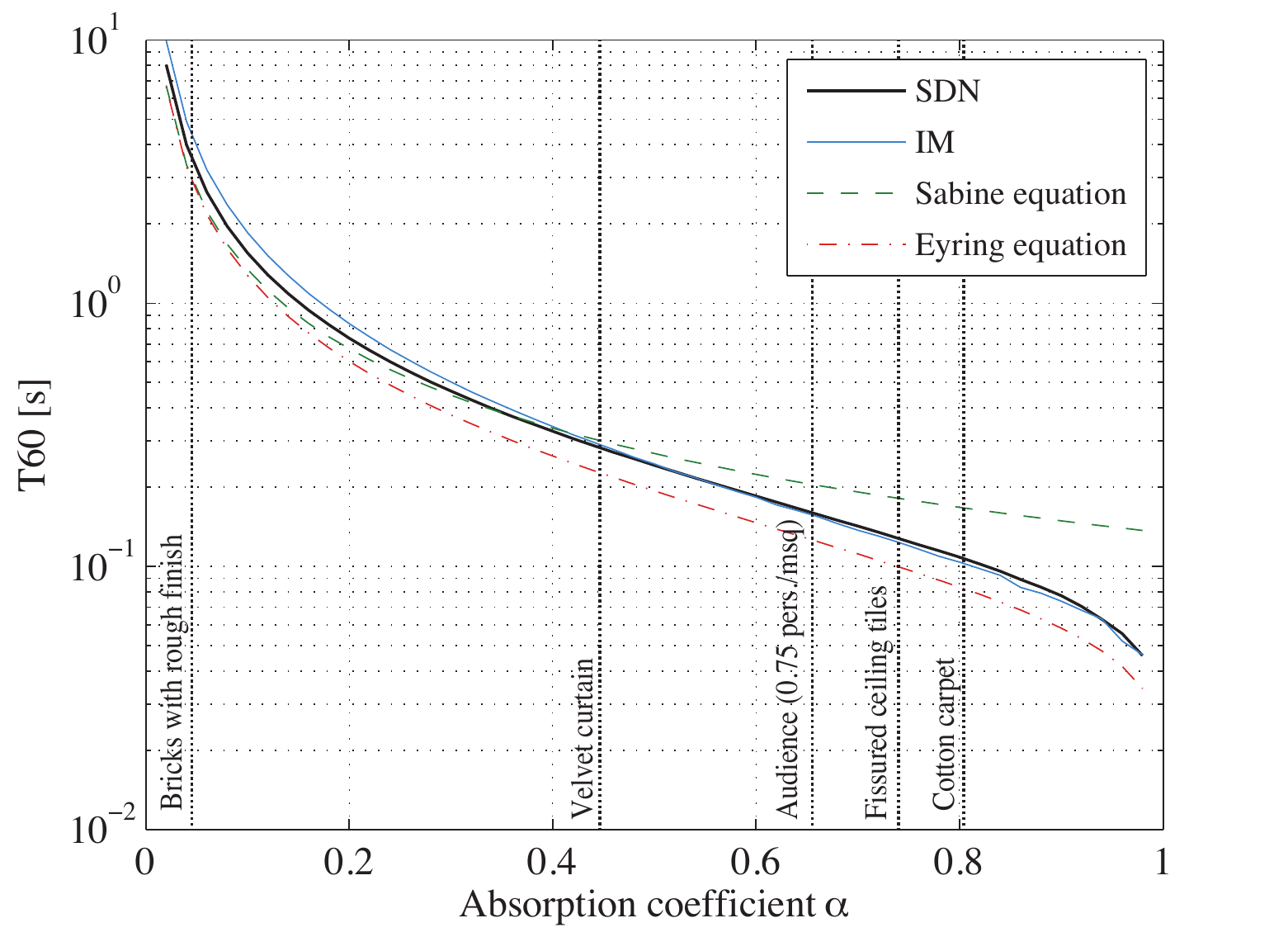}
\caption{Reverberation time values as a function of the absorption
  coefficient $\alpha$ for \protect\gls{sdn}, \protect\gls{ism}, and Sabine and Eyring
  predictions. The simulated enclosure is a cube with $5$ m edge.
  $T_{60}$ values are averaged across 10 source and microphone random
  positions.  The $1$~kHz absorption coefficient of various materials
  as measured by Vorl{\"a}nder in \cite{vorlander2008auralization} are
  reported at the bottom of the plot as reference.}
\label{fig:rt60vsalpha}
\end{figure}

It may be observed that for absorption coefficients higher than around
$\alpha=0.4$, \gls{sdn} and the \gls{ism} are nearly identical and are
both between Sabine and Eyring's formulas.  For high absorption
coefficients, both \gls{sdn} and the \gls{ism} approach Eyring's formula,
which is known to give more accurate predictions in that
region~\cite{eyring1930reverberation}.  For low absorption
coefficients, \gls{sdn} and the \gls{ism} produce reverberation times
longer than Sabine and Eyring's formulas, with \gls{sdn} being closer
to both.

\subsubsection{Frequency-dependent wall absorption}

\begin{figure}
\centering
\includegraphics[width=0.45\textwidth]{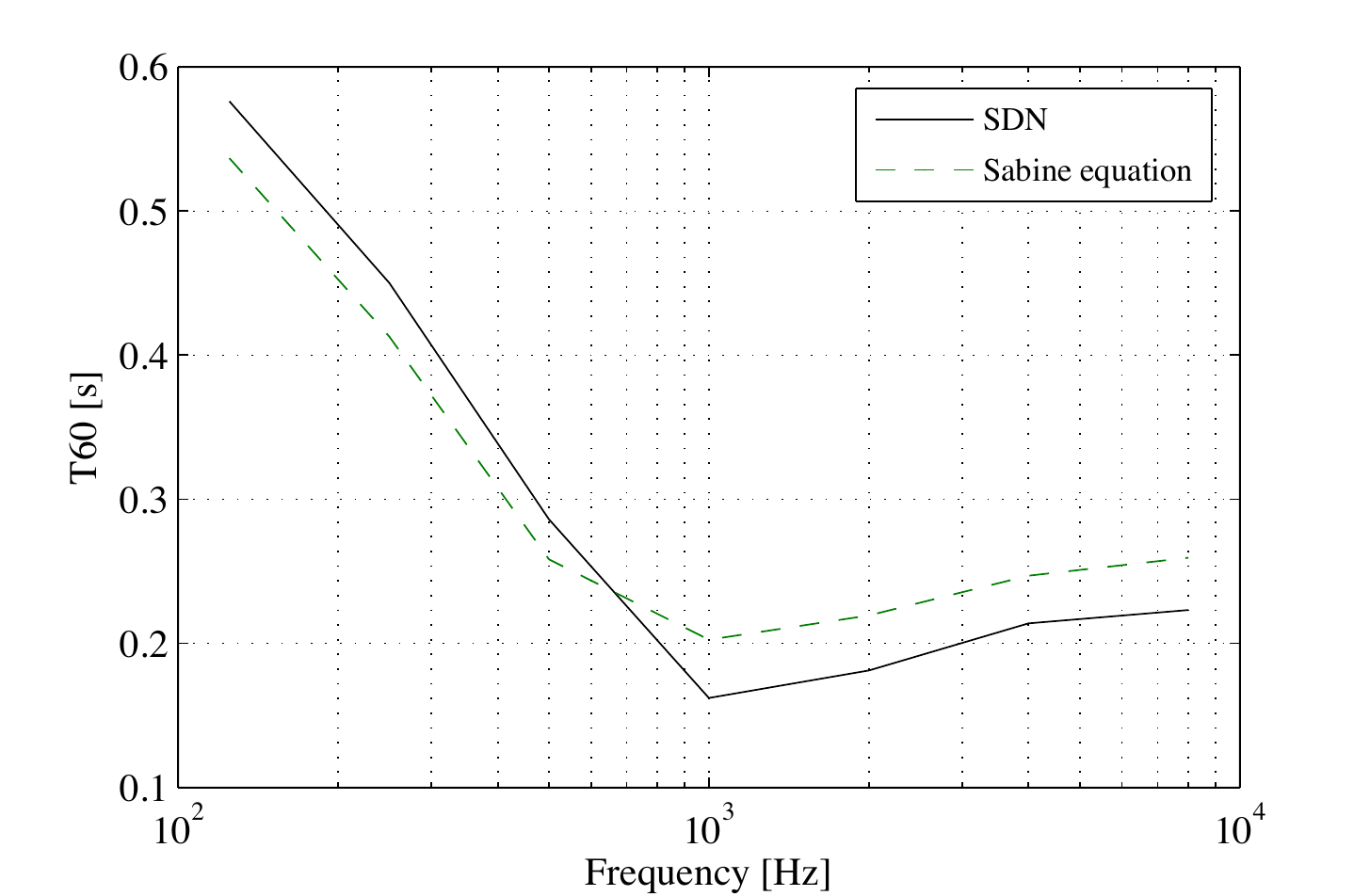}
\caption{Comparison of reverberation time in different octave bands
  for SDN and Sabine's formula prediction.}\label{fig:filters}
\end{figure}

\figurename~\ref{fig:filters} shows the result of a simulation where
all walls mimic the frequency-dependent absorption of cotton carpet.
The filters $H_i(z)$ are all set to be equal to a filter $H(z)$ which was implemented as a minimum-phase IIR filter with
coefficients optimized by a damped Gauss-Newton method to fit the
absorption coefficients reported by Vorl{\"a}nder in
\cite{vorlander2008auralization}.
This procedure gave
$$H(z)=\frac{0.6876 -1.9207 z^{-1}+ 1.7899z^{-2}
  -0.5567z^{-3}}{1 -2.7618z^{-1}+ 2.5368z^{-2}
  -0.7749z^{-3}}~.$$ 
  The  source and microphone were positioned on the diagonal
of a cubic room with  $5$ m edge.
More specifically, they were positioned on the diagonal at a distance of
$d_{min}=2.96$ m from the center, as specified using (\ref{eq:ISOdist}).

In \figurename~\ref{fig:filters} the wall filter response is plotted
together with the corresponding Sabine predictions in (\ref{eq:sabine}), which for the given
room becomes
\begin{equation}
T_{60,Sab}= \frac{0.161 V}{\sum_i
  A_i\alpha_i(\omega)}=\frac{0.161\cdot 5}{6
  \alpha(\omega)}=\frac{0.161\cdot 5}{6 (1-|H(e^{j\omega})|^2)}~.
\end{equation}
The simulated \gls{rir} was fed into an octave-band filter bank, and
$T_{60}$ values were calculated for each octave-band.  As shown in
\figurename~\ref{fig:filters}, these measured $T_{60}$ are very close
to Sabine's formula prediction, thus confirming that the proposed
model allows controlling the absorption behavior of wall materials
explicitly.

Note that if explicit control of the reverberation time is sought, 
the prediction functions (\ref{eq:sabine}) or (\ref{eq:eyring}) 
can be inverted to obtain  needed $\alpha(\omega)$.
To this end, Eyring's formula (\ref{eq:eyring}) is preferable, 
since (\ref{eq:sabine}) may yield non-physical values for the absorption coefficient (i.e.~$\alpha>1$) 
of some acoustically ``dead" rooms \cite{eyring1930reverberation}.

\subsection{Echo density}
\label{sec:echo}

The time evolution of echo density 
is commonly thought to influence the 
perceived time-domain \emph{texture} of reverberation~\cite{huang2007aspects}.
In an effort to quantify this perceptual attribute, 
Abel and Huang defined the \gls{ned},
which was found to have a very strong correlation with 
results of listening tests~\cite{huang2007aspects}.
The \gls{ned} is defined 
as the percentage of samples of the room impulse response lying 
more than a standard deviation away from the mean 
in a given time window compared to that expected for Gaussian noise.
A \gls{ned} equal to $1$ means that, within the considered window,  
the number of 
samples lying 
more than one standard deviation away from the mean is equal to the  
one observed with Gaussian noise.

\figurename~\ref{fig:echo} shows the time evolution of the \gls{ned} 
obtained with the \gls{ism} and with the proposed model using 
three different scattering matrices.
The scattering matrices are (a) the isotropic matrix, (b) a random orthogonal matrix, and (c) a random permutation matrix. 
The random orthogonal matrix was obtained by setting 
the angles of a Givens-rotation 
parametrization of orthogonal matrices~\cite{vetterli1995wavelets} at random.
The simulated enclosure was a rectangular room with 
dimensions $3.2\times 4.0 \times 2.7$ m, 
and results were averaged across $50$ random 
pairs of source and microphone positions.
The wall gains were set to~$\beta=-\sqrt{0.9}$ (wall absorption of $\alpha=0.1$),
with the negative sign being chosen in order to obtain a zero-mean 
reverberation tail with the \gls{ism}.

\figurename~\ref{fig:echo} shows that the 
build-up of echo density of \gls{sdn} is
very close to that of the \gls{ism} when the isotropic scattering matrix is employed.
In particular, 
the \gls{ned} values of $0.3$ and $0.75$, 
which were previously identified 
as breakpoints dividing three perceptually distinct groups~\cite{huang2007aspects}, 
are reached at around the same delays by the two methods. 
Notice how the permutation matrix 
fails to reach a Gaussian-like reverberation.
The random orthogonal matrix, on the other hand, 
does reach a Gaussian-like reverberation, 
but it takes longer to achieve the desired reverberation quality characterized by the $0.75$ breakpoint.

\begin{figure}
\centering
\includegraphics[width=0.5\textwidth]{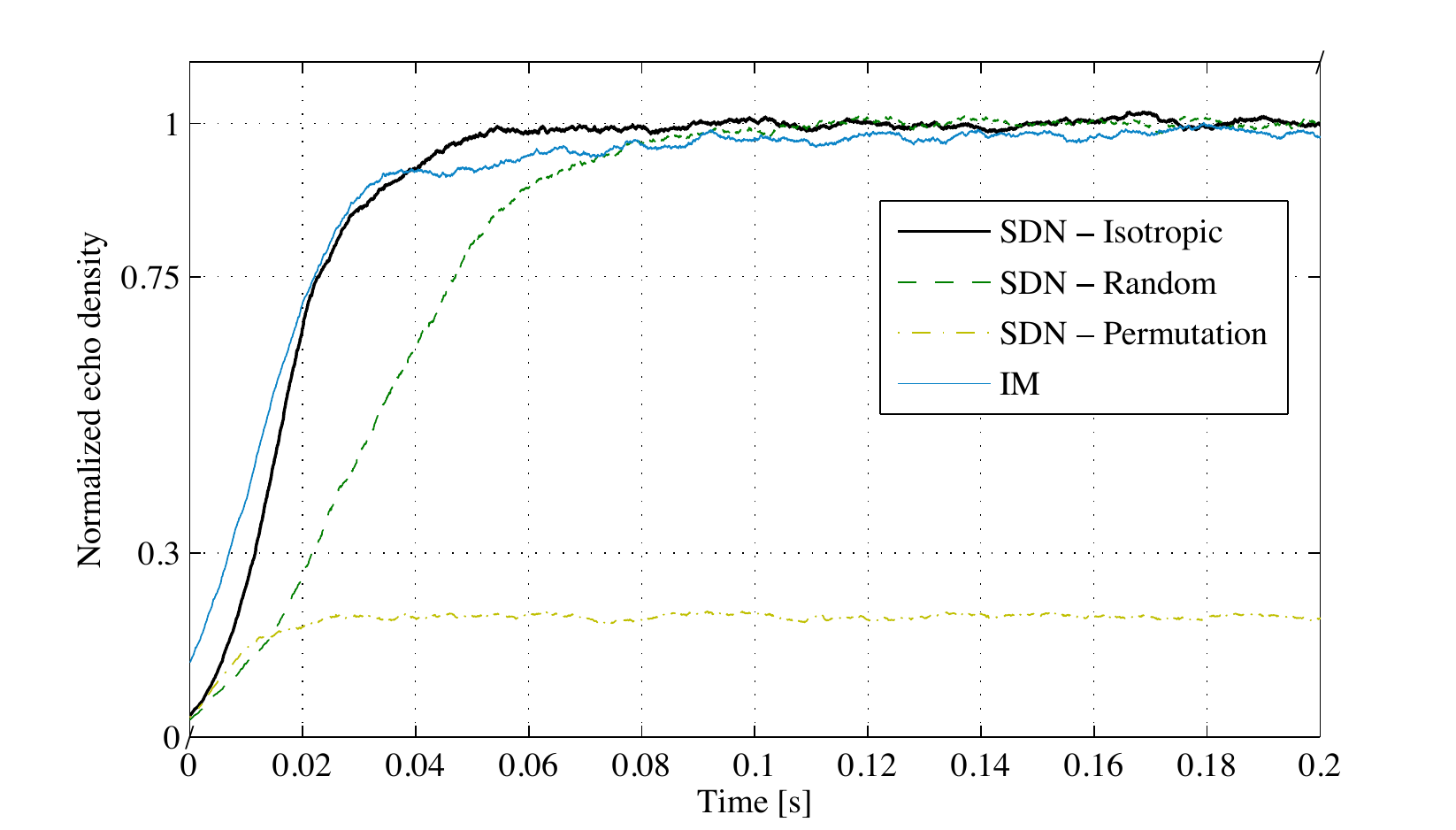}
\caption{Time evolution of the normalized echo density for the \protect\gls{ism}
  and \protect\gls{sdn} with various scattering matrices $\mathbf{S}$. }
\label{fig:echo}
\end{figure}

\subsection{Mode density}

The mode density, i.e. the average number of resonant frequencies per Hz, 
is another important perceptual property in artificial reverberation~\cite{jot1991}.
In order to achieve a natural-sounding reverberation, 
the mode density should be sufficiently high, 
such that no single resonance stands out causing metallic-sounding artifacts. 
A threshold for the minimum mode density that is commonly used in this context is~\cite{jot1991}:
\begin{equation}
d_{min}=\frac{T_{60}}{6.7}~,
\end{equation}
which is due to an early work of Schroeder~\cite{schroeder1962frequency}.

The mode density in \gls{sdn} can be calculated  
using considerations similar to those used for \glspl{fdn}~\cite{smith2010physical,jot1991}. 
Assuming that the wall filters are simple gains, i.e. $\overline{\mathbf{H}}(z)=\beta\mathbf{I}$, 
and applying the inversion lemma to the transfer function~(\ref{eq:transferfunct}), 
the poles of the system can be calculated as the solutions of
\begin{equation}
\det\left(\mathbf{D}_f(z^{-1})-\beta\overline{\mathbf{A}}\mathbf{P}\right)=0~.
\label{eq:determinant}
\end{equation}
Using Leibniz's formula for determinants it is easy to see that 
the order of the polynomial in (\ref{eq:determinant}) is equal to the summation of all the delay-line lengths 
in the \gls{sdn} backbone, i.e. $\sum_{i}\sum_{j \neq i}D_{i,j}$.
Using the fundamental theorem of algebra, and assuming that the poles are uniformly distributed~\cite{smith2010physical}, 
the mode density of the \gls{sdn} network can be calculated as
\begin{equation}
d_f=\frac{1}{F_s}\sum_{i}\sum_{j \neq i}D_{i,j}~.
\label{eq:schroeder}
\end{equation}

The \gls{sdn} structure satisfies $d_f>d_{min}$ under most conditions of practical interest.
This can be easily shown analytically in the case where
source and microphone are close to the volumetric center of a cubic room with edge $L$. 
In this case, the length of the delay lines is approximately $L \frac{F_s}{c}$ for the six lines
connecting opposite walls and $\frac{\sqrt{2}}{2} L \frac{F_s}{c}$ for the remaining twenty four lines.
The mode density is thus 
\begin{equation}
d_f = \left(6+24\frac{\sqrt{2}}{2}\right) \frac{L}{c}\approx 23 \frac{L}{c}~.
\label{eq:modedensity}
\end{equation}
The condition $d_f>d_{min}$ is then satisfied whenever 
\begin{equation}
L > \frac{c}{6.7 \times 23} T_{60} \approx 2.22~T_{60}~.
\label{eq:edgemin}
\end{equation}
Since practical $T_{60}$ values for reverberation are on the order of a second, 
it follows that \glspl{sdn} 
has a sufficient mode density as long as $L>2.22$ m 
(i.e. volume larger than around $11~\text{m}^3$), 
which covers most cases of practical interest.

By replacing $T_{60}$ in (\ref{eq:edgemin}) with Sabine's approximation (\ref{eq:sabine}),
it can also be seen that 
the condition $d_f>d_{min}$ is satisfied whenever 
the absorption coefficient 
is larger than $\alpha>0.06$, or, equivalently, $\beta<0.97$.

In order to assess the mode density in cases more general than the cubic one,
Monte Carlo simulations were run using rectangular rooms 
with randomly selected parameters.
More specifically, the three room dimensions 
were each drawn from a uniform distribution between $2$ and $10$ meters.
The absorption coefficient was common to all walls 
and was drawn from a uniform distribution between $0$ and $1$.
The microphone and the source were placed in positions chosen at random 
within the room boundaries. 
Out of 1000 simulations, \gls{sdn} provided a sufficient mode density in the $94.7$\% of cases.
Among the cases that did not provide sufficient echo density, 
the largest absorption coefficient was $\alpha=0.066$,
which is largely in agreement with the result obtained above for cubic rooms.

\section{Conclusions and Future Work}\label{sec:conclusions}
\glsreset{sdn}

This paper presented a scalable and interactive artificial reverberator 
termed \gls{sdn}.
The room is modeled by
scattering nodes interconnected by bidirectional delay lines. These
scattering nodes are positioned at the points where first-order
reflections originate. In this way, the first-order reflections are
simulated correctly, while a rich but less accurate reverberation tail
is obtained. 
It was shown that, according to various objective measures of perceptual features, 
\gls{sdn} achieves a reverberation quality similar to that of the \gls{ism} 
while having a computational load one to two orders of magnitude lower.

The interested reader can listen to \gls{sdn}-generated audio samples
that are made available as supplementary downloadable material 
at http://ieeexplore.ieee.org and at~\cite{desenaorgsdn}.

Several directions for future research can be envisioned on the basis  
of the work presented in this paper. 
The design of the backbone network in \figurename~\ref{fig:SDN} 
has a simple geometrical interpretation.
However, the lengths of the delay lines in the backbone network can 
be designed using different approaches.
It is believed, in fact, that, as long as the network 
has a mean-free-path similar to the one of the corresponding physical space,
the \gls{sdn} will conserve its appealing properties observed in Sec.~\ref{sec:results}.
The delay-line lengths could be designed, for instance, 
to minimize the average timing error of higher-order reflections 
or to achieve a better fit with the modal frequencies of the physical space. 
The number of \gls{sdn} nodes could also be increased.
Indeed, 
while using a single \gls{sdn} node for each wall 
is the minimum to ensure that all higher-order reflections are modeled, 
a larger number of nodes could be used, for instance, to further
increase the modal density or to emulate higher-order reflections exactly.
This would, of course, come at the expense of an increased computational complexity.

\section*{Appendix}
\label{app:matrix}

\noindent Towards proving Theorem~\ref{theorem:param}, we first establish the following lemma.
\begin{lemma}
Two  diagonalizable matrices have the same eigenvalues if and only if they are similar.
\label{lem:eigensimilar}
\end{lemma}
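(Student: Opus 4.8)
The plan is to prove both directions of the equivalence: diagonalizable matrices with equal eigenvalues are similar, and similar matrices (one of which is diagonalizable) share eigenvalues and both are diagonalizable.

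For the easy direction, suppose $\mathbf{A}$ and $\mathbf{B}$ are similar, i.e. $\mathbf{B} = \mathbf{P}^{-1}\mathbf{A}\mathbf{P}$ for some invertible $\mathbf{P}$. Then they have the same characteristic polynomial, $\det(\mathbf{B} - \lambda\mathbf{I}) = \det(\mathbf{P}^{-1}(\mathbf{A}-\lambda\mathbf{I})\mathbf{P}) = \det(\mathbf{A}-\lambda\mathbf{I})$, hence the same eigenvalues with the same algebraic multiplicities. (This half does not even require diagonalizability, though the lemma as stated assumes it on both sides.)

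For the substantive direction, suppose $\mathbf{A}$ and $\mathbf{B}$ are both diagonalizable and have the same eigenvalues $\lambda_1,\dots,\lambda_n$ (listed with multiplicity). Since $\mathbf{A}$ is diagonalizable, there is an invertible $\mathbf{T}_A$ with $\mathbf{T}_A^{-1}\mathbf{A}\mathbf{T}_A = \mathbf{\Lambda}$, where $\mathbf{\Lambda} = \mathrm{diag}(\lambda_1,\dots,\lambda_n)$. The one subtlety is that "same eigenvalues with the same multiplicity" must be interpreted as \emph{same algebraic multiplicities}; but for diagonalizable matrices algebraic and geometric multiplicities coincide, so the eigenvalue with its multiplicity determines the diagonal form up to reordering. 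Thus $\mathbf{B}$ is similarly diagonalizable, $\mathbf{T}_B^{-1}\mathbf{B}\mathbf{T}_B = \mathbf{\Lambda}$ with the \emph{same} $\mathbf{\Lambda}$ after a suitable permutation of the columns of $\mathbf{T}_B$. Combining, $\mathbf{T}_A^{-1}\mathbf{A}\mathbf{T}_A = \mathbf{T}_B^{-1}\mathbf{B}\mathbf{T}_B$, so $\mathbf{A} = \mathbf{T}_A\mathbf{T}_B^{-1}\,\mathbf{B}\,\mathbf{T}_B\mathbf{T}_A^{-1} = \mathbf{P}^{-1}\mathbf{B}\mathbf{P}$ with $\mathbf{P} = \mathbf{T}_B\mathbf{T}_A^{-1}$ invertible, establishing similarity.

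The main obstacle here is not any deep computation but rather bookkeeping of multiplicities and reordering: one must be careful that "having the same eigenvalues" is understood with multiplicity (a matrix and a strictly larger identity-padded block would otherwise be a trivial counterexample of different sizes, or within a fixed size, $\mathrm{diag}(1,1,2)$ versus $\mathrm{diag}(1,2,2)$), and that the permutation matching the two diagonal forms is applied consistently. Once the multiplicity convention is fixed, the argument is a short chain of conjugations. I would state the lemma's hypotheses so that "same eigenvalues" means identical multisets of eigenvalues, which is the natural reading in context since later this lemma is applied to matrices of a common fixed dimension in the proof of Theorem~\ref{theorem:param}.
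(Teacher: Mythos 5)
Your proof is correct and follows essentially the same route as the paper's: both directions are handled identically, with the substantive direction diagonalizing $\mathbf{A}$ and $\mathbf{B}$ to a common diagonal matrix $\mathbf{\Lambda}$ and composing the two similarity transformations (the converse being the standard characteristic-polynomial fact, which the paper simply cites). Your explicit attention to reading ``same eigenvalues'' as equal multisets and to permuting the columns of $\mathbf{T}_B$ so that both diagonalizations yield the \emph{same} $\mathbf{\Lambda}$ is a detail the paper glosses over, so it is a welcome tightening rather than a different argument.
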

\begin{proof}[Proof of Lemma \ref{lem:eigensimilar}]
The sufficiency is a well know result on similar matrices~\cite{hogben2006handbook}.
To prove the necessity, let us consider
 two diagonalizable matrices $\mathbf{A}$ and~$\mathbf{B}$ which have the same eigenvalues. 
Since $\mathbf{A}$ is diagonalizable, it follows that $\mathbf{V}_A^{-1}\mathbf{A}\mathbf{V}_A=\mathbf{\Lambda}$, where
$\mathbf{\Lambda}$ is a diagonal matrix and $\mathbf{V}_A$ is an invertible matrix. Hence, the following equality holds:
 $\mathbf{A}\mathbf{V}_A=\mathbf{V}_A\mathbf{\Lambda}$, which implies that $\mathbf{\Lambda}$ has eigenvalues
 of $\mathbf{A}$ on its diagonal. Since $\mathbf{B}$ is also diagonalizable and has the same eigenvalues as $\mathbf{A}$, it
 satisfies $\mathbf{V}_B^{-1}\mathbf{B}\mathbf{V}_B=\mathbf{\Lambda}$.  Thus, 
$\mathbf{V}_A^{-1}\mathbf{A}\mathbf{V}_A=\mathbf{\Lambda}=\mathbf{V}_B^{-1}\mathbf{B}\mathbf{V}_B$, which further implies that 
$\mathbf{A}=\left(\mathbf{V}_B\mathbf{V}_A^{-1}\right)^{-1}\mathbf{B}\left(\mathbf{V}_B\mathbf{V}_A^{-1}\right)$, {\sl i.e.} the two matrices are similar.
\end{proof}
Using Lemma \ref{lem:eigensimilar} we can now prove 
Theorem~\ref{theorem:param}:
\begin{proof}[Proof of Theorem~\ref{theorem:param}]
First, observe that $\mathbf{\Lambda}$ is block diagonal, and that eigenvalues of each block are mutually distinct. 
Hence, each block of  $\mathbf{\Lambda}$ is diagonalizable, and therefore
$\mathbf{\Lambda}$ is itself diagonalizable~\cite{hogben2006handbook} 
 (over the field of complex numbers, $\mathbb{C}$). Thus, $\mathbf{\Lambda}$ 
 has the same eigenvalues as $\mathbf{A}$ and is diagonalizable.
Since $\mathbf{\Lambda}$ and $\mathbf{A}$ are both diagonalizable, and have the same eigenvalues, according to 
Lemma \ref{lem:eigensimilar} they must be similar over $\mathbb{C}$.
Moreover, since both  $\mathbf{\Lambda}$ and $\mathbf{A}$ are real and similar over $\mathbb{C}$, 
they must be also similar over $\mathbb{R}$~\cite{hogben2006handbook}, that is,
there must exist a real invertible matrix $\mathbf{T}$ such that $\mathbf{A}=\mathbf{T}^{-1}\mathbf{\Lambda}\mathbf{T}$.
On the other hand if $\mathbf{A}$ has the form given in (\ref{eq:aform}), it is diagonalizable since $\mathbf{\Lambda}$ is diagonalizable.
\end{proof}

\begin{proof}[Proof of Theorem \ref{theorem:houseeigen}]
Substituting the definition of a Householder transformation into the
cost function in (\ref{eq:scatproblem}), here termed $\Phi(\vec{v})$,
leads with simple algebraic manipulations to
$\Phi(\vec{v})=\left(\vec{v}^T\vec{v}-1\right)\vec{v}^T\vec{v}-\vec{v}^T\mathbf{D}\vec{v}+\text{const}$.
Minimizing $\Phi(\vec{v})$ subject to
$\vec{v}^T\vec{v}=1$  is therefore equivalent to maximising 
$\Phi_1(\vec{v})=\vec{v}^T\mathbf{D}\vec{v}$
under the same constraint. The new cost function can be further expressed as
$$\Phi_1(\vec{v})=\frac{1}{2}\left(\vec{v}^T\mathbf{D}\vec{v}+\vec{v}^T\mathbf{D}^T\vec{v}\right)=
\frac{1}{2}\vec{v}^T\left(\mathbf{D}+\mathbf{D}^T\right)\vec{v}
$$
and the unit norm vector which maximise it, is therefore the singular vector 
which corresponds to the largest singular value of $\mathbf{D}+\mathbf{D}^T$.
\end{proof}

\begin{proof}[Proof of Theorem \ref{theorem:isotropic}]
The property that $\mathbf{S}$ scatters energy from each node equally
among all other nodes requires that all off-diagonal elements are identical, and thus
that $\mathbf{A}$ has the form

\begin{equation}
\arraycolsep=5.4pt\def\arraystretch{1.0}
\mathbf{A}=
  \begin{bmatrix}
    a_1 & a_0  & \cdots & a_0\\
    a_0 & a_2  & \cdots & a_0\\
    \cdots & \cdots  & \cdots & \cdots \\
    a_0 & a_0 & \cdots & a_K\\
  \end{bmatrix}~.
\end{equation}
Orthogonality of $\mathbf{A}$ requires that $a_0, \dots,a_K$ satisfy
\begin{equation}
\left\{
\begin{aligned}
&a_i^2+(K-1)a_0^2=1~~~~~~~~~~~~~~i=1,\dots,K \\
&a_0(a_i+a_j)+(K-2)a_0^2=0~~~\forall i\neq j 
\end{aligned}
\right.~.
\label{eq:scatteringsystem}
\end{equation}
The first constraint can be written as $a_i=\pm\sqrt{1-(K-1)a_0^2}$,
which implies that all diagonal elements are identical in magnitude.
The second constraint implies that if $a_0\neq 0$
(the solution $a_0=0$ is ignored since it does not scatter energy), 
the diagonal elements have also the same sign.
Solving (\ref{eq:scatteringsystem}) with $a_1=\dots=a_K$ 
yields $a_1= \pm(2-K)/K$ and $a_0=\pm 2/K$, thus proving the theorem.
\end{proof}

\section*{Acknowledgments}
The authors would like to thank Jonathan Abel and Geoffrey Robinson 
for their help and support.

\bibliographystyle{IEEEtran}
\bibliography{library}

\begin{IEEEbiography}[{\includegraphics[width=1in,height=1.25in,clip,keepaspectratio]{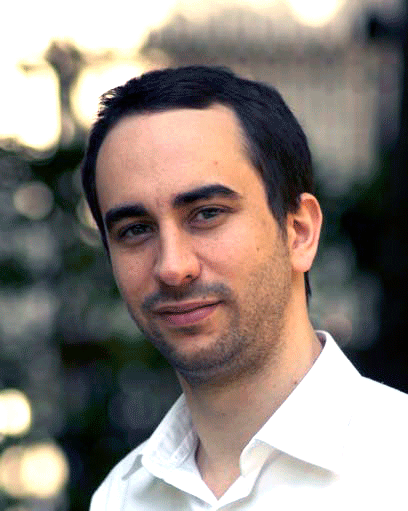}}]{Enzo De Sena}

(S'11-M'14) received the B.Sc. in 2007 and M.Sc. \emph{cum laude} in 2009, both from the Universit\`a degli Studi di Napoli "Federico II" in Telecommunication Engineering. 
In 2013, he received the Ph.D. degree from King's College London in Electronic Engineering.
He is currently a Postdoctoral Research Fellow at KU Leuven.  
From 2012 to 2013 he was a Teaching Fellow at King's College London.
From 2013 to 2015 he was a Marie Curie Fellow in the  
``Dereverberation and Reverberation of Audio, Music, and Speech" ITN at KU Leuven.
He previously collaborated with the Network Research Lab at UCLA (2007-2009),  
and he was a Visiting Researcher at 
the Center for Computer Research in Music and Acoustics at Stanford University (2013)
and at the Signal and Information Processing section at Aalborg University (2014-2015).
His current research interests include room acoustics modelling, 
multichannel audio systems, microphone beamforming and binaural modelling.
He is a member of IEEE, EURASIP,  
and the Acoustical Society of America. 
\end{IEEEbiography}

\begin{IEEEbiography}[{\includegraphics[width=1in,height=1.25in,clip,keepaspectratio]{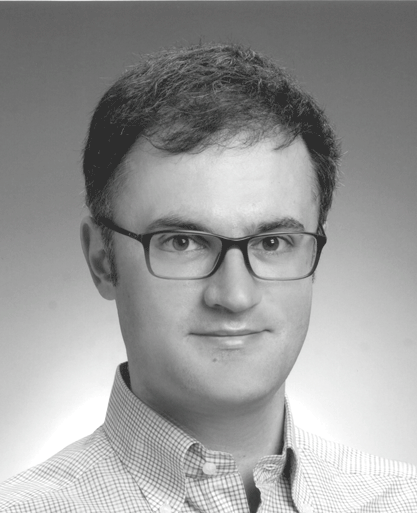}}]{H\"useyin Hac\i{}habibo\u{g}lu}
(S'96-M'00-SM'12) received the B.Sc. (honors) degree from the Middle East Technical University (METU), Ankara, Turkey, in 2000, the M.Sc. degree from the University of Bristol, Bristol, U.K., in 2001, both in electrical and electronic engineering, and the Ph.D. degree in computer science from Queen's University Belfast, Belfast, U.K., in 2004. He held research positions at University of Surrey, Guildford, U.K.\ (2004-2008) and King's College London, London, U.K.\ (2008-2011). Currently, he is an Associate Professor and Head of Department of Modelling and Simulation at Graduate School of Informatics, Middle East Technical University, Ankara, Turkey. His research interests include audio signal processing, room acoustics, multichannel audio systems, psychoacoustics of spatial hearing, microphone arrays, and game audio. Dr.\ Hac\i{}habibo\u{g}lu is a member of the IEEE Signal Processing Society, Audio Engineering Society (AES),  Turkish Acoustics Society (TAD), and the European Acoustics Association (EAA).
\end{IEEEbiography}

\begin{IEEEbiography}[{\includegraphics[width=1in,height=1.25in,clip,keepaspectratio]{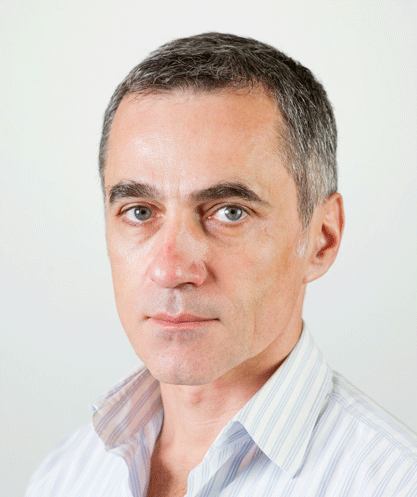}}]{Zoran Cvetkovi\'c}
(SM'04) is Professor of  Signal Processing at King's College London. He received his Dipl. Ing. and Mag. degrees from the University of Belgrade, Yugoslavia,  the M.Phil. from Columbia University, and the Ph.D. in electrical engineering from the University of California, Berkeley. He held research positions at EPFL, Lausanne, Switzerland (1996), and at Harvard University (2002-04). Between 1997 and 2002 he was a member of the technical staff of AT\&T Shannon Laboratory. His research interests are in the broad area of signal processing, ranging from theoretical aspects of signal analysis to applications in audio and speech technology, and neuroscience. From 2005 to 2008 he served as an Associate Editor of IEEE Transactions on Signal Processing.
\end{IEEEbiography}

\begin{IEEEbiography}[{\includegraphics[width=1in,height=1.25in,clip,keepaspectratio]{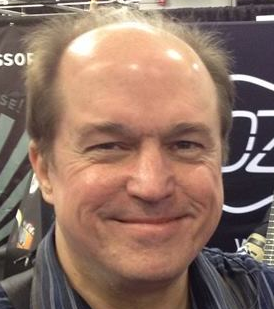}}]{Julius O.~Smith} received the B.S.E.E. degree from Rice
University, Houston, TX, in 1975 (Control, Circuits, and
Communication).  He received the M.S. and Ph.D. degrees in E.E. from
Stanford University, Stanford, CA, in 1978 and 1983, respectively.
His Ph.D. research was devoted to improved methods for digital filter
design and system identification applied to music and audio systems.
From 1975 to 1977 he worked in the Signal Processing Department at
ESL, Sunnyvale, CA, on systems for digital communications.  From 1982
to 1986 he was with the Adaptive Systems Department at Systems Control
Technology, Palo Alto, CA, where he worked in the areas of adaptive
filtering and spectral estimation.  From 1986 to 1991 he was employed
at NeXT Computer, Inc., responsible for sound, music, and signal
processing software for the NeXT computer workstation.  After NeXT, he
became an Associate Professor at the Center for Computer Research in
Music and Acoustics (CCRMA) at Stanford, teaching courses and pursuing
research related to signal processing techniques applied to music and
audio systems.  Continuing this work, he is presently Professor of
Music and (by courtesy) Electrical Engineering
at Stanford University.  For more information, see
http://ccrma.stanford.edu/\~{}jos/.
\end{IEEEbiography}

\end{document}